\newcommand\relphantom[1]{\mathrel{\phantom{#1}}}
\newtheorem{lemma}{Lemma}
\newtheorem{theorem}{Theorem}
\begin{document}
\bibliographystyle{plain}

\title{Robust Beamforming Design for Sum Secrecy Rate Optimization in MU-MISO Networks}

\author{Pu Zhao, Meng Zhang, Hui Yu, Hanwen Luo and Wen Chen, \textit{Senior Member}, \textit{IEEE}
\thanks{This paper is partially sponsored by National Key Project of China (No.2013ZX03001007-004), by National 973 project (No. 2012CB316106), and by National Natural Science Foundation of China under Grants (No.61471236 and no. 61328101).

The authots are with the Department of Electronic Engineering, Shanghai Jiao Tong University, Shanghai 200240, China (email:\{zhaopu90, mengzhang, yuhui, hwluo. wenchen\}@sjtu.edu.cn), Wen Chen is also with School of Electronics and Automation, Guilin University of Electronic Technology, China.
} }

\maketitle

\begin{abstract}
This paper studies the beamforming design problem of a multi-user downlink network, assuming imperfect channel state information known to the base station. In this scenario, the base station is equipped with multiple antennas, and each user is wiretapped by a specific eavesdropper where each user or eavesdropper is equipped with one antenna. It is supposed that the base station employs transmit beamforming with a given requirement on sum transmitting power. The objective is to maximize the sum secrecy rate of the network. Due to the uncertainty of the channel, it is difficult to calculate the exact sum secrecy rate of the system. Thus, the maximum of lower bound of sum secrecy rate is considered. The optimization of the lower bound of sum secrecy rate still makes the considered beamforming design problem difficult to handle. To solve this problem, a beamforming design scheme is proposed to transform the original problem into a convex approximation problem, by employing semidefinite relaxation and first-order approximation technique based on Taylor expansion. Besides, with the advantage of low complexity, a zero-forcing based beamforming method is presented in the case that base station is able to nullify the eavesdroppers' rate. When the base station doesn't have the ability, user selection algorithm would be in use. Numerical results show that the former strategy achieves better performance than the latter one, which is mainly due to the ability of optimizing beamforming direction, and both outperform the signal-to-leakage-and-noise ratio based algorithm.
\end{abstract}

\section{Introduction}

Although the initial  concept about secrecy transmission can be traced back to the 1970s [1], wireless transmission issues concerning physical layer security rate have attracted considerable attention in recent years. Traditional communication methods based on High-level encryption can hardly be used to improve physical layer security rate in practical situation, such as WLAN and Ad-hoc network. In WLAN scenario, the unpredictable random access and leave of users would lead to difficulties for the establishment of an appropriate and reasonable high-level encryption protocol. In addition, in Ad-hoc networks, a complete data transmission could go through several hops and other users may participate in relaying data, which would decrease secrecy rate.

One basic idea on the physical layer security is artificial noise. It adds artificial noise to the transmission signal expecting that the artificial noise would provide more negative effects for eavesdroppers than legitimate users. In [2] and [3], the authors analyze the effect of artificial noise on enhancing secrecy capacity. The difference is that in [2] the transmitter generates the artificial noise, while in [3] an external node helps to accomplish this work. In paper [4], the secure transmission in multiuser and  multiple-eavesdropper  systems is investigated. Several strategies are illustrated to suppress the channel interference. Numerical results demonstrate that adding artificial noise to the transmission message would improve the system secrecy rate.
In paper [5], the authors investigate secure communication between two multi-antenna nodes with an undetected eavesdropper. Unlike previous work, no information regarding the eavesdropper is available. Artificial interference is applied to mask the desired signal. The authors maximize the power available to hide the desired signal from a potential eavesdropper, while maintaining a prespecified signal-to-interference-plus-noise-ratio (SINR) at the desired receiver. The case of the presence of imperfect channel state information (CSI) is also studied.

Another category to protect the physical layer security is based on beamforming. In [6], a relay assisted system is studied with the assumption that the relay is unreliable. Precoding designs of the base station (BS) and relay are provided to maximize the secrecy capacity. In addition, [7] takes the two-way relay scenario into consideration. In this situation, eavesdroppers would utilize the received information obtained from two transmission slots to get the desired eavesdropped users' information. Optimal precoding design strategy in this scenario is presented. In [8], a joint beamforming design of the source and relay based on quality-of-service (QoS) requirements with
presence of channel uncertainty is investigated.
[9] studies the secrecy rate in decode-and-forward relay scenario with finite-alphabet input. A power control scheme based on semidefinite programming is presented for the purpose of maximizing secrecy rate with finite-alphabet input.
[10] investigates the relay-eavesdropper network with two models of imperfect knowledge of the eavesdropper¡¯s channel. The approximation of the ergodic secrecy rate is studied under the Rician fading channel model and the worst-case secrecy rate is considered under the deterministic uncertainty model. Under both models, the optimal rank-1, match-and-forward (MF), and zero-forcing (ZF) beamformers are developed. The effectiveness of the proposed relay beamformers are verified by the numerical results. Paper [11] provides precoding strategies in a coordinated multi-point (CoMP) transmission system.



Robust design of beamforming and artificial noise has been investigated in multiple-input-single-output (MISO) networks. In [12], the authors address the physical layer security in MISO communication systems. The transmission covariance matrices of the steering information and the artificial noise are investigated to maximize the worst-case secrecy rate in a resource-constrained system and to minimize the use of resources to ensure an average secrecy rate. [12] investigates  a three node network including only one user.
Paper [13] considers a multiuser MISO downlink system with the presence of passive eavesdroppers and potential eavesdroppers. The problem of minimizing the total transmit power takes into account artificial noise and energy signal generation for protecting the transmitted information against both considered types of eavesdroppers. The semi-definite programming (SDP) relaxation approach is adopted to obtain the optimal solution. Both [12] and [13] use deterministic model for modelling the CSI uncertainty. Paper [14] proposes a linear precoder for a multiuser MIMO system in which multiusers potentially act as eavesdroppers. The proposed precoder is based on regularized channel inversion with a regularization parameter and power allocation vector. Then, an extension of the algorithm by jointly optimizing the regularization parameter and the power allocation vector is presented to maximize the secrecy sum-rate. Robust beamforming design in relay systems has also been investigated. [15] investigates the non-robust and robust cases of joint optimization in bidirectional multi-user multi-relay MIMO systems. The authors mainly concentrate on the sum MSE criterion as well as maximum user's MSE. In [16], the relaying robust beamforming for device-to-device communication with channel uncertainty is considered.

In our paper, beamforming design of sum secrecy rate (SSR) optimization is investigated under sum power constraint for multiuser MISO channel. To solve the original complicated and nonconvex problem, an efficient approximation algorithm based on Taylor expansion is developed for the purpose of near-optimal solutions. In addition, a beamforming design scheme based on ZF with low complexity is presented. Numerical results demonstrate the better performance of the former algorithm. Both of the algorithms are compared to the SLNR algorithm, which mainly minimizes the power leaking to other user¡¯s channel space, and proved to be better.

The remaining part of this paper is organized as follows. The system model and the sum power constrained beamforming problem are presented in Section II. The solution based on Taylor expansion of the approximation problem under the assumption of imperfect CSI is discussed in detail in Section III. In Section IV,
the beamforming design scheme based on ZF at the BS is demonstrated. Simulation results are presented in Section V. Finally, the conclusions are drawn in Section VI.

$\textit{Notation}$: In this paper, we use bold uppercase and lowercase letters to denote matrices and vectors, respectively. $(\cdot)^{T}$ and $(\cdot)^{H}$ denote the transpose and the conjugate transpose of a matrix or a vector, respectively. $(\cdot)^{*}$ denotes the conjugate of a matrix or a vector which means $(\cdot)^{*}=((\cdot)^{T})^{H}$. $\textrm{Tr}(\cdot)$ is the trace of a matrix. $\textrm{rank}(\cdot)$ denotes the rank of a matrix. We use the expression $x\sim\mathcal{N}(u,\sigma^2)$ if $x$ is complex Gaussian distributed with mean $u$ and variance $\sigma^2$. $\parallel{\cdot}\parallel$ denotes the Frobenius norm. $\succeq$ represents the property of semidefinite. $ x^+ := \textrm{max}\{ x,0 \}$. $C(n,m)$ denotes the number of all combinations of $m$ different elements chosen from $n$ elements.

\section{System Model and Problem Formulation}

\subsection{System Model}

In this paper, we will investigate a network incorporating one BS and 2$K$ users, or equivalently $K$ user-eaves pairs. In the user-eaves pair, one legitimate user is wiretapped by an eavesdropper as shown in Fig. 1. It is presumed that the BS is equipped with $N_t$ antennas where $N_t\geq K$, and each user or eavesdropper with a single antenna. It is supposed that the BS serves the $K$ wiretapped users while each eavesdropper attempts to wiretap the legitimate user in the same user-eaves pair. It is assumed that the BS only knows imperfect CSI of each user and eavesdropper due to limited feedback or other reasons, and the channel estimation error is norm-bounded. The BS employs transmit beamforming to communicate with $K$ users. Let $s_i(t)$ denote the information signal sent for the $i$-th user at time t, and let $\mathbf{w}_{i}\in\mathbb{C}^{N_t\times1}$ be the corresponding beamforming vector. The received signal at the $i$-th wiretapped user is given by
\begin{eqnarray}\label{eq:1}
x_i(t) = \mathbf{h}_{i}^T \mathbf{w}_{i}s_i(t) + \sum_{k = 1,k\neq i}^K \mathbf{h}_{i}^T \mathbf{w}_{k}s_k(t) + n_i(t),
\end{eqnarray}
where $\mathbf{h}_{i} \in\mathbb{C}^{N_t\times 1}$  denotes the channel vector between the BS and the $i$-th user, and ${n}_i$ is additive Gaussian noise at the $i$-th user satisfying $n_i\sim\mathcal{N}(0,\sigma_i^2)$. As seen from (\ref{eq:1}), each eavesdropped user suffers from the intracell interference in addition to the noise. It is presumed that all receivers employ single-user detection where the intracell interference is simply treated as background noise.

\begin{figure}[!htbp]
\centering
\includegraphics[scale=0.4]{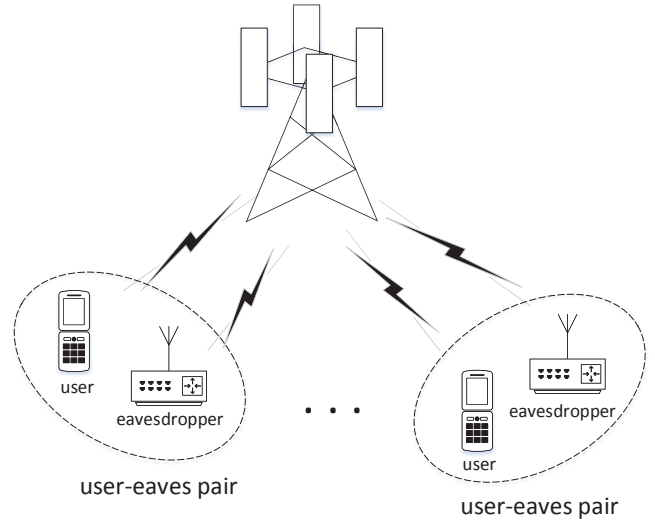}
\caption{Illustration of multiple user-eaves pair wiretapping model}
\label{fig_network_model}
\end{figure}

Under the assumption that each user is wiretapped by the specific eavesdropper, let the $i$-th eavesdropper be the one wiretapping the $i$-th user, which means the $i$-th eavesdropper has the knowledge of $\mathbf{w}_{i}$. The received signal at the $i$-th eavesdropper can be written as
\begin{eqnarray}\label{eq:2}
z_i(t) = \mathbf{g}_{i}^T \mathbf{w}_{i}s_i(t) +
\sum_{k=1,k\neq i}^K \mathbf{g}_{i}^T \mathbf{w}_{k}s_k(t) + m_i(t),
\end{eqnarray}
where $\mathbf{g}_{i} \in\mathbb{C}^{N_t\times 1}$ denotes the channel vector between the BS and the $i$-th eavesdropper, and $m_i$ is additive Gaussian noise at the $i$-th eavesdropper satisfying $m_i\sim\mathcal{N}(0,\varsigma_i^2)$.

The instantaneous achievable rate of the $i$-th user can be transformed into the following
\begin{eqnarray}\label{eq:3}
r_i = \textrm{log}_2\left( 1+ \frac{\vert \mathbf{h}_{i}^T \mathbf{w}_{i} \vert^2 }
{  \sum_{k\neq i} \vert \mathbf{h}_{i}^T \mathbf{w}_{k}\vert^2 + \sigma_i^2 }  \right),
\end{eqnarray}
and the rate achieved by the $i$-th eavesdropper can be expressed as
\begin{eqnarray}\label{eq:4}
s_i = \textrm{log}_2\left( 1+ \frac{\vert \mathbf{g}_{i}^T \mathbf{w}_{i} \vert^2 }
{  \sum_{k\neq i} \vert \mathbf{g}_{i}^T \mathbf{w}_{k}\vert^2 + \varsigma_i^2 }  \right).
\end{eqnarray}

Notice that in this paper, the channel is supposed to be imperfect. Under the assumption that the channel estimation error is norm-bounded, we have
\begin{eqnarray}\label{eq:5}
\mathbf{h}_i=\mathbf{\overline{h}}_i + \triangle \mathbf{h}_i, \|\triangle \mathbf{h}_i\| \leq \varepsilon_h, \\
\mathbf{g}_i=\mathbf{\overline{g}}_i + \triangle \mathbf{g}_i, \|\triangle \mathbf{g}_i\| \leq \varepsilon_g,
\end{eqnarray}
where $\mathbf{\overline{h}}_i$ and $\mathbf{\overline{g}}_i$ are the estimated channel of users and eavesdroppers, while $\triangle \mathbf{h}_i$ and $\triangle \mathbf{g}_i$ are the channel estimation errors, respectively. $\varepsilon_h$ and $\varepsilon_g$ are the bounds of the norm of the channel estimation error. It is presumed that $\varepsilon_h = \varepsilon_g = \varepsilon$.

Based on the expressions above, the SSR is defined as
\begin{eqnarray}\label{eq:6}
\sum_{i=1}^K (r_i - s_i).
\end{eqnarray}

Notice that in this paper, sum power constraint should be satisfied which is
\begin{eqnarray}\label{eq:7}
\qquad  \sum_{i=1}^K { \Vert \mathbf{w}_{i} \Vert^2 \leq P}.
\end{eqnarray}

\subsection{Problem Formulation}

The objective is to maximize the SSR with the constraint of sum transmitting power. Thus, the optimization problem can be formulated as
\begin{subequations}\label{eq:8}
\begin{eqnarray} 
\nonumber
&&{\mathop{\mathop {\textrm{max}}_{\mathbf{w}_{i}\in \mathbb{C}^{N_t\times1}} }_{i=1,\ldots, K} \sum_{i=1}^K \Bigg[ { { \textrm{log}_2} \left( 1+ \frac{\vert \mathbf{h}_{i}^T \mathbf{w}_{i} \vert^2 }{  \sum_{k\neq i} \vert \mathbf{h}_{i}^T \mathbf{w}_{k}\vert^2 + \sigma_i^2 }  \right)} } \\
&&\relphantom{\mathop{\mathop {\textrm{max}}_{\mathbf{w}_{i}\in \mathbb{C}^{N_t}} }_{i=1,\ldots, K}} -{\textrm{log}_2\left( 1+ \frac{\vert \mathbf{g}_{i}^T \mathbf{w}_{i} \vert^2 }{  \sum_{k\neq i} \vert \mathbf{g}_{i}^T \mathbf{w}_{k}\vert^2 + \varsigma_i^2 }  \right)}  \Bigg],\\
&&\quad \textrm{s.t.}\ \quad \sum_{i=1}^K { \Vert \mathbf{w}_{i} \Vert^2 \leq P}.
\end{eqnarray}
\end{subequations}

However, the BS only has the knowledge of the estimated channel and the bound of the norm of the channel estimation error, which would bring difficulties to the modeling of the SSR problem. To deal with this, the lower bound of the objective in (\ref{eq:8}) would be taken into consideration merely. Lemma 1 would be introduced first to get the lower bound of the objective.
\begin{lemma}
: For the two problems in the following,
\begin{eqnarray}\label{eq:9}
\mathop{\textrm{max}}_{\|\mathbf{x}\| \leq \sigma} \phi(\mathbf{x})=\textrm{Re}(\mathbf{x}^{H}\mathbf{y}),\\
\mathop{\textrm{min}}_{\|\mathbf{x}\| \leq \sigma} \varphi(\mathbf{x})=\textrm{Re}(\mathbf{x}^{H}\mathbf{y}),
\end{eqnarray}
where $\sigma$ and $\mathbf{y}$ are given parameters, their solutions can be expressed as
\begin{eqnarray}\label{eq:10}
\phi(\frac{\sigma}{\|\mathbf{y}\|}\mathbf{y} )=\sigma\|\mathbf{y}\|,\\
\varphi(-\frac{\sigma}{\|\mathbf{y}\|}\mathbf{y} )=-\sigma\|\mathbf{y}\|.
\end{eqnarray}
\end{lemma}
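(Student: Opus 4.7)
The plan is to view both problems as linear optimization of a real-valued functional over a ball and reduce everything to the Cauchy--Schwarz inequality. First I would rewrite the inner product as $\mathbf{x}^{H}\mathbf{y} = \sum_i \overline{x_i}\, y_i$ and use the standard chain of bounds $\textrm{Re}(\mathbf{x}^{H}\mathbf{y}) \leq |\mathbf{x}^{H}\mathbf{y}| \leq \|\mathbf{x}\|\cdot\|\mathbf{y}\|$, and then invoke the feasibility constraint $\|\mathbf{x}\|\leq\sigma$ to conclude $\textrm{Re}(\mathbf{x}^{H}\mathbf{y})\leq\sigma\|\mathbf{y}\|$. Symmetrically, $\textrm{Re}(\mathbf{x}^{H}\mathbf{y})\geq-|\mathbf{x}^{H}\mathbf{y}|\geq-\sigma\|\mathbf{y}\|$, which gives the lower bound needed for the minimization problem.

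Next I would verify attainability by plugging the candidate points into the objective. For $\mathbf{x}^{\star}=\frac{\sigma}{\|\mathbf{y}\|}\mathbf{y}$ one has $\|\mathbf{x}^{\star}\|=\sigma$, so feasibility holds, and $(\mathbf{x}^{\star})^{H}\mathbf{y}=\frac{\sigma}{\|\mathbf{y}\|}\mathbf{y}^{H}\mathbf{y}=\sigma\|\mathbf{y}\|$, which is already real and nonnegative; therefore $\phi(\mathbf{x}^{\star})=\sigma\|\mathbf{y}\|$, matching the upper bound. For the minimization, plugging in $\mathbf{x}^{\star\star}=-\frac{\sigma}{\|\mathbf{y}\|}\mathbf{y}$ gives $\varphi(\mathbf{x}^{\star\star})=-\sigma\|\mathbf{y}\|$, again matching the lower bound. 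Since both bounds are sharp and achieved at feasible points, these are the global optima.

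The only mildly subtle point, and the one I would be careful about, is that $\mathbf{y}$ can be complex while the objective is the real part, so I should justify that aligning $\mathbf{x}$ colinearly with $\mathbf{y}$ (not merely choosing $\|\mathbf{x}\|=\sigma$) is what makes $\mathbf{x}^{H}\mathbf{y}$ purely real, so that no phase rotation can do better. I would handle the degenerate case $\mathbf{y}=\mathbf{0}$ separately, where any feasible $\mathbf{x}$ attains the trivial optimum value $0$ and the formulas $\pm\frac{\sigma}{\|\mathbf{y}\|}\mathbf{y}$ are read as a limiting convention. There is no real obstacle here; the lemma is essentially a restatement of Cauchy--Schwarz together with its equality condition.
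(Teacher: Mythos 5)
Your proposal is correct and follows essentially the same route as the paper: bound $\textrm{Re}(\mathbf{x}^{H}\mathbf{y})$ between $\pm|\mathbf{x}^{H}\mathbf{y}|$, apply Cauchy--Schwarz with $\|\mathbf{x}\|\leq\sigma$ to get $\pm\sigma\|\mathbf{y}\|$, and verify attainment at $\mathbf{x}=\pm\frac{\sigma}{\|\mathbf{y}\|}\mathbf{y}$. Your explicit check of attainability and the $\mathbf{y}=\mathbf{0}$ remark are slightly more careful than the paper's appeal to the linear-dependence equality condition, but the argument is the same.
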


\begin{proof}
 Apparently, we have
\begin{eqnarray}\label{eq:39}
-| \mathbf{x}^{H}\mathbf{y}|  \leq \textrm{Re}(\mathbf{x}^{H}\mathbf{y}) \leq  | \mathbf{x}^{H}\mathbf{y} |,
\end{eqnarray}
where $\mathbf{x}$ and $\mathbf{y}$ are vectors. The Cauchy-Schwarz inequality states that for all vectors x and y of an inner product space, it is true that  $|\langle x,y \rangle|^2 \leq  \langle x,x \rangle \cdot \langle y,y \rangle$, where $\langle\cdot,\cdot\rangle$ is the inner product. According to Cauchy-Schwarz inequality,
\begin{eqnarray}\label{eq:40}
|\mathbf{x}^{H}\mathbf{y}|  \leq \|\mathbf{x}\|\|\mathbf{y}\| \leq \sigma\|\mathbf{y}\|.
\end{eqnarray}
Then we have
\begin{eqnarray}\label{eq:41}
-\sigma\|\mathbf{y}\|  \leq \textrm{Re}(\mathbf{x}^{H}\mathbf{y}) \leq  \sigma\|\mathbf{y}\|.
\end{eqnarray}
The inequality holds with equality when $\mathbf{x}$ and $\mathbf{y}$ are linearly dependent.

It can be concluded that the upper bound of $\textrm{Re}(\mathbf{x}^{H}\mathbf{y})$ is $\sigma\|\mathbf{y}\|$ at the point  $\mathbf{x}=\frac{\sigma}{\|\mathbf{y}\|}\mathbf{y} $, and the lower bound of $\textrm{Re}(\mathbf{x}^{H}\mathbf{y})$ is $-\sigma\|\mathbf{y}\|$ at the point $\mathbf{x}=-\frac{\sigma}{\|\mathbf{y}\|}\mathbf{y} $.
\end{proof}

Then, based on Lemma 1, the upper bound of $|\mathbf{h}_i \mathbf{w}_i|^2$ is given by
\begin{eqnarray}\label{eq:11}
\nonumber
&&|\mathbf{h}_i^T \mathbf{w}_i|^2\\\nonumber
&&=\mathbf{h}_i^T \mathbf{w}_i \mathbf{w}^H_i \mathbf{h}^*_i\\\nonumber
&&=\left( \overline{\mathbf{h}}_i^T + \triangle \mathbf{h}_i^T \right) \mathbf{w}_i \mathbf{w}^H_i \left( \overline{\mathbf{h}}^*_i + \triangle \mathbf{h}^*_i \right)\\\nonumber
&&=\mathbf{\overline{h}}_i^T \mathbf{w}_i \mathbf{w}^H_i \mathbf{\overline{h}}^*_i + 2 \textrm{Re} \left\{ \triangle \mathbf{h}_i^T \mathbf{w}_i \mathbf{w}^H_i \mathbf{\overline{h}}^*_i \right\}\\
&&\leq\mathbf{\overline{h}}_i^T \mathbf{w}_i \mathbf{w}^H_i \mathbf{\overline{h}}^*_i + 2 \varepsilon \| \mathbf{w}_i \mathbf{w}^H_i \mathbf{\overline{h}}^*_i\|.
\end{eqnarray}
Similarly, the lower bound of $|\mathbf{h}_i \mathbf{w}_i|^2$ can be expressed as
\begin{eqnarray}\label{eq:12}
\nonumber
&&|\mathbf{h}_i^T \mathbf{w}_i|^2\\\nonumber
&&=\mathbf{\overline{h}}_i^T \mathbf{w}_i \mathbf{w}^H_i \mathbf{\overline{h}}^*_i + 2 \textrm{Re} \left\{ \triangle \mathbf{h}_i^T \mathbf{w}_i \mathbf{w}^H_i \mathbf{\overline{h}}^*_i \right\}\\
&&\geq\mathbf{\overline{h}}_i^T \mathbf{w}_i \mathbf{w}^H_i \mathbf{\overline{h}}^*_i - 2 \varepsilon \| \mathbf{w}_i \mathbf{w}^H_i \mathbf{\overline{h}}^*_i\|.
\end{eqnarray}
Notice that in (\ref{eq:11}) and (\ref{eq:12}), the second order error terms $\triangle \mathbf{h}_i^T \mathbf{w}_i \mathbf{w}^H_i \triangle \mathbf{h}^*_i$ is omitted since it is quite small compared with other terms. In (\ref{eq:12}), $\varepsilon$ is usually small and $\mathbf{\overline{h}}_i^T \mathbf{w}_i \mathbf{w}^H_i \mathbf{\overline{h}}^*_i - 2 \varepsilon \| \mathbf{w}_i \mathbf{w}^H_i \mathbf{\overline{h}}^*_i\|$ is larger than 0. When $\varepsilon$ is large enough, the lower bound, $\mathbf{\overline{h}}_i^T \mathbf{w}_i \mathbf{w}^H_i \mathbf{\overline{h}}^*_i - 2 \varepsilon \| \mathbf{w}_i \mathbf{w}^H_i \mathbf{\overline{h}}^*_i\|$, may be smaller than 0, which has no practical meaning. Due to $\sum_{k=1}^K \mathbf{\overline{h}}_i^T \mathbf{w}_i \mathbf{w}^H_i \mathbf{\overline{h}}^*_i - 2 \varepsilon \| \mathbf{w}_i \mathbf{w}^H_i \mathbf{\overline{h}}^*_i\|$, the problem what is large enough $\varepsilon$ is dependent on the channels and the beamformers. The sum of several terms makes it more difficult to analyses the effect of large $\varepsilon$. From the simulation results, we find that if $\varepsilon \geq 0.6 $, we may not be able to get the beamformers or the sum secrecy rate will be quite low. In this case, the channel estimation error is intolerant and our method may not be able to solve this problem.

Thus, the lower bound of the objective in (\ref{eq:8}) is as follows.
\begin{small}
 \begin{eqnarray}\label{eq:13}
\nonumber
\sum_{i=1}^K \Bigg[ \textrm{log}_2 \left( \frac{ \sum^K_{k=1}(\mathbf{\overline{h}}_i^T \mathbf{w}_k \mathbf{w}^H_k \mathbf{\overline{h}}^*_i - 2 \varepsilon \| \mathbf{w}_k \mathbf{w}^H_k \mathbf{\overline{h}}^*_i\| ) +\sigma^2_i}{  \sum_{k\neq i}( \mathbf{\overline{h}}_i^T \mathbf{w}_k \mathbf{w}^H_k \mathbf{\overline{h}}^*_i + 2 \varepsilon \| \mathbf{w}_k \mathbf{w}^H_k \mathbf{\overline{h}}^*_i\|)+\sigma^2_i  } \right) &&\\\nonumber
\\\nonumber
- \textrm{log}_2\left( \frac{ \sum^K_{k=1}(\mathbf{\overline{g}}_i^T \mathbf{w}_k \mathbf{w}^H_k \mathbf{\overline{g}}^*_i + 2 \varepsilon \| \mathbf{w}_k \mathbf{w}^H_k \mathbf{\overline{g}}^*_i\| ) +\varsigma^2_i}{  \sum_{k\neq i}( \mathbf{\overline{g}}_i^T \mathbf{w}_k \mathbf{w}^H_k \mathbf{\overline{g}}^*_i - 2 \varepsilon \| \mathbf{w}_k \mathbf{w}^H_k \mathbf{\overline{g}}^*_i\|)+\varsigma^2_i  } \right)\Bigg].&&
\\
\end{eqnarray}
\end{small}

Problem (\ref{eq:14}) can be formulated as the following
\begin{small}
\begin{subequations}\label{eq:14}
 \begin{eqnarray}
 \nonumber
 &&\mathop{\mathop {\textrm{max}}_{\mathbf{w}_{i}\in \mathbb{C}^{N_t\times1}} }_{i=1,\ldots, K}  \\ \nonumber
 &&\sum_{i=1}^K \Bigg[ \textrm{log}_2 \left( \frac{ \sum^K_{k=1}(\mathbf{\overline{h}}_i^T \mathbf{w}_k \mathbf{w}^H_k \mathbf{\overline{h}}^*_i - 2 \varepsilon \| \mathbf{w}_k \mathbf{w}^H_k \mathbf{\overline{h}}^*_i\| ) +\sigma^2_i}{  \sum_{k\neq i}( \mathbf{\overline{h}}_i^T \mathbf{w}_k \mathbf{w}^H_k \mathbf{\overline{h}}^*_i + 2 \varepsilon \| \mathbf{w}_k \mathbf{w}^H_k \mathbf{\overline{h}}^*_i\|)+\sigma^2_i  } \right) \\\nonumber
 \\ \nonumber
 && - \textrm{log}_2\left( \frac{ \sum^K_{k=1}(\mathbf{\overline{g}}_i^T \mathbf{w}_k \mathbf{w}^H_k \mathbf{\overline{g}}^*_i + 2 \varepsilon \| \mathbf{w}_k \mathbf{w}^H_k \mathbf{\overline{g}}^*_i\| ) +\varsigma^2_i}{  \sum_{k\neq i}( \mathbf{\overline{g}}_i^T \mathbf{w}_k \mathbf{w}^H_k \mathbf{\overline{g}}^*_i - 2 \varepsilon \| \mathbf{w}_k \mathbf{w}^H_k \mathbf{\overline{g}}^*_i\|)+\varsigma^2_i  } \right)\Bigg],\\
 \\
&&\quad \textrm{s.t.}\ \quad \sum_{i=1}^K { \Vert \mathbf{w}_{i} \Vert^2 \leq P}.
\end{eqnarray}
\end{subequations}
\end{small}

In problem (9), the perfect CSI is not known to the BS. The BS only has the knowledge of the estimated channel and the bound of the norm of the channel estimation error. Under this condition, problem (9) can't be solved since the parameters of the real channels in problem (9)  are missing. Then, the problem to maximize the lower bound of SSR is investigated and problem (20) is formulated based on the estimated channel. If the BS can obtain perfect CSI, that means the bound of the norm of the channel estimation error is set to zero. Then problem (20) would be the same as problem (9).

The upper bound (17) and the lower bound (18) are important in the formulation of problem (20). It is not easy to estimate the tightness of the upper bound and the lower bound. In the simulation results,  the solutions $\mathbf{w}_i$ are obtained. Then we compute the lower bound of SSR through (20a) and the practical SSR through (9a). A comparison between them can illustrate that the gap between the lower bound of SSR and the practical SSR is small.

The traditional leakage-based beamforming scheme [17] might be applicable to such situation, which mainly minimizes the power leaking to other user¡¯s channel space. The solution is given by
$\mathbf{w}_i^o\varpropto max. eigenvector((\delta_i^2\mathbf{I}+\mathbf{\tilde{H}}_i^H \mathbf{\tilde{H}}_i)^{-1}\mathbf{h}_i^*\mathbf{h}_i^T)$, where $\mathbf{\tilde{H}} = [ \mathbf{h}_1^T\cdots \mathbf{h}_{i-1}^T,\mathbf{h}_{i+1}^T\cdots\mathbf{h}_K^T ]$. The norm of $\mathbf{w}_i^o$ is adjusted according to the transmit power.
Besides the signal-to-leakage-and-noise ratio (SLNR) scheme, our proposed algorithms are illustrated in Section III and Section IV. The comparisons between the proposed algorithms and SLNR algorithm shown in section V will demonstrate that the proposed algorithms are better.

\section{Approximation Method Based on Taylor Expansion}%
In this section, an efficient approximation algorithm is developed for the purpose of local-optimal solutions of problem (\ref{eq:14}). Since the objective function (\ref{eq:14}a) is nonconvex and complicated, problem (\ref{eq:14}) is difficult to solve. A convex approximation method based on Taylor expansion will be presented to handle problem (\ref{eq:14}) efficiently in the following.

\subsection{Convex Approximation}

To make the problem more tractable, new matrixs $\mathbf{W}_i=\mathbf{w}_{i} \mathbf{w}_{i}^H$ are introduced. Thus, we have $\textrm{Tr}(\mathbf{W}_i) =\Vert \mathbf{w}_{i} \Vert^2$. It should be noticed that if $\mathbf{W}_i$ instead of $\mathbf{w}_{i}$ is used as the variables to optimize, $\mathbf{W}_i$ has to satisfy $\textrm{rank}(\mathbf{W}_i)=1$, which would violate the problem's convexity. Then, SDR (semidefinite relaximation), a convex optimization based approximation technique, is applied to omit the rank-one constraint. The approximated problem can be transformed into the following
\begin{subequations}\label{eq:15}
 \begin{eqnarray}
 \nonumber
 &&\mathop{\mathop {\textrm{max}}_{\mathbf{W}_{i}\in \mathbb{C}^{N_t\times N_t}} }_{i=1,\ldots, K}  \\ \nonumber
 &&\sum_{i=1}^K \Bigg[ \textrm{log}_2 \left( \frac{ \sum^K_{k=1}(\mathbf{\overline{h}}_i^T \mathbf{W}_k \mathbf{\overline{h}}^H_i - 2 \varepsilon \| \mathbf{W}_k \mathbf{\overline{h}}^*_i\| ) +\sigma^2_i}{  \sum_{k\neq i}( \mathbf{\overline{h}}_i^T \mathbf{W}_k  \mathbf{\overline{h}}^*_i + 2 \varepsilon \| \mathbf{W}_k \mathbf{\overline{h}}^*_i\|)+\sigma^2_i  } \right) \\ \nonumber
 \\ \nonumber
 && - \textrm{log}_2\left( \frac{ \sum^K_{k=1}(\mathbf{\overline{g}}_i^T \mathbf{W}_k  \mathbf{\overline{g}}^*_i + 2 \varepsilon \| \mathbf{W}_k  \mathbf{\overline{g}}^*_i\| ) +\varsigma^2_i}{  \sum_{k\neq i}( \mathbf{\overline{g}}_i^T \mathbf{W}_k   \mathbf{\overline{g}}^*_i - 2 \varepsilon \| \mathbf{W}_k \mathbf{\overline{g}}^*_i\|)+\varsigma^2_i  } \right)\Bigg],\\
 \\
 &&\quad \textrm{s.t.} \ \quad{ \sum_{i=1}^K \textrm{Tr}(\mathbf{W}_i)\leq P},\\
 &&\ \ \relphantom{\textrm{s.t.}}\quad\  \mathbf{W}_i \succeq 0,\quad i=1,\ldots,K.
\end{eqnarray}
\end{subequations}

It should be mentioned that SDR has been widely used in various beamforming design problems. If the solution of the problem satisfies the rank-one constraints, then eigenvalue decomposition would be utilized to obtain the practical optimal solution; otherwise, randomization technique can be applied [18].

Problem (\ref{eq:15}), however, is still nonconvex yet since the objective function (\ref{eq:15}a) is nonconvex. Therefore, further approximations are needed.
Let us consider the following change of variables,
\begin{subequations}\label{eq:16}
\begin{eqnarray}
&&e^{x_i} \triangleq { \sum^K_{k=1}(\mathbf{\overline{h}}_i^T \mathbf{W}_k \mathbf{\overline{h}}^*_i - 2 \varepsilon \| \mathbf{W}_k \mathbf{\overline{h}}^*_i\| ) +\sigma^2_i},   \\
&&e^{y_i} \triangleq {  \sum_{k\neq i}( \mathbf{\overline{h}}_i^T \mathbf{W}_k  \mathbf{\overline{h}}^*_i + 2 \varepsilon \| \mathbf{W}_k \mathbf{\overline{h}}^*_i\|)+\sigma^2_i  },   \\
&&e^{p_i} \triangleq { \sum^K_{k=1}(\mathbf{\overline{g}}_i^T \mathbf{W}_k  \mathbf{\overline{g}}^*_i + 2 \varepsilon \| \mathbf{W}_k  \mathbf{\overline{g}}^*_i\| ) +\varsigma^2_i},    \\
&&e^{q_i} \triangleq {  \sum_{k\neq i}( \mathbf{\overline{g}}_i^T \mathbf{W}_k   \mathbf{\overline{g}}^*_i - 2 \varepsilon \| \mathbf{W}_k \mathbf{\overline{g}}^*_i\|)+\varsigma^2_i  },
\end{eqnarray}
\end{subequations}
for $i=1,\ldots,K$. Note that $|\mathbf{h}_i^T \mathbf{w}_k|^2 \geq 0 $, which means the lower bound and upper bound of $|\mathbf{h}_i^T \mathbf{w}_k|^2 $ should be no less than 0. Then we have
\begin{eqnarray}\label{eq:42}
e^{x_i} \geq \sigma^2_i, \ e^{y_i} \geq \sigma^2_i, \ e^{p_i} \geq \varsigma^2_i, \ and \ e^{q_i} \geq \varsigma^2_i.
\end{eqnarray}
Due to the sum power constraint, $e^{x_i}$, $e^{y_i}$, $e^{p_i}$ and $e^{q_i}$ won't be infinity. It can be observed that $x_i$, $y_i$, $p_i$ and $q_i$ are bounded.

By substituting (\ref{eq:16}) into (\ref{eq:15}a), one can reformulate problem (\ref{eq:15}) as the following
\begin{subequations}\label{eq:17}
\begin{eqnarray}
&&\mathop{\mathop {\textrm{max}}_{ x_i,y_i,p_i,q_i \in\mathbb{R}}}_{\mathbf{W}_{i}\in S,\forall i}  \textrm{log}_2 \prod_{i=1}^K (e^{(x_i-y_i)-(p_i-q_i)}), \quad\ \\\nonumber
&&\quad\ \  \textrm{s.t.}\ \ \ \quad { \sum^K_{k=1}(\mathbf{\overline{h}}_i^T \mathbf{W}_k \mathbf{\overline{h}}^*_i - 2 \varepsilon \| \mathbf{W}_k \mathbf{\overline{h}}^*_i\| ) +\sigma^2_i}\geq e^{x_i} ,\\
\\\nonumber
&&\quad\ \ \ \relphantom{\textrm{s.t.}}\quad\ {  \sum_{k\neq i}( \mathbf{\overline{h}}_i^T \mathbf{W}_k  \mathbf{\overline{h}}^*_i + 2 \varepsilon \| \mathbf{W}_k \mathbf{\overline{h}}^*_i\|)+\sigma^2_i  } \leq e^{y_i},   \\
\\\nonumber
&&\quad\ \ \ \relphantom{\textrm{s.t.}}\quad\ { \sum^K_{k=1}(\mathbf{\overline{g}}_i^T \mathbf{W}_k  \mathbf{\overline{g}}^*_i + 2 \varepsilon \| \mathbf{W}_k  \mathbf{\overline{g}}^*_i\| ) +\varsigma^2_i} \leq e^{p_i} ,    \\
\\\nonumber
&&\quad\ \ \ \relphantom{\textrm{s.t.}}\quad\ {  \sum_{k\neq i}( \mathbf{\overline{g}}_i^T \mathbf{W}_k   \mathbf{\overline{g}}^*_i - 2 \varepsilon \| \mathbf{W}_k \mathbf{\overline{g}}^*_i\|)+\varsigma^2_i  } \geq e^{q_i},\\
\end{eqnarray}
\end{subequations}
where
\begin{eqnarray}\label{eq:18}
\nonumber
&&S \triangleq \Bigg \{ \mathbf{W}_1,\ldots,\mathbf{W}_K\succeq 0 \ \Big \arrowvert\ \\
&&{\mathbf{W}_{i}\in \mathbb{C}^{N_t\times N_t}},\sum_{i=1}^K\textrm{Tr}( \mathbf{W}_i)\leq P \Bigg \}.
\end{eqnarray}

The objective function can be transformed into  ${\sum_{i=1}^K[(x_i-y_i)-(p_i-q_i)]}\textrm{log}_2 e$.
It can be seen that the objective function is convex. Notice that the equalities in (\ref{eq:16}) have been replaced by inequalities as in (\ref{eq:17}b) to (\ref{eq:17}e). It could be verified by the monotonicity of the objective function that all the inequalities in (\ref{eq:17}b) to (\ref{eq:17}e) hold with equalities at the optimal points. To be specific, in the process of solving problem (\ref{eq:17}), if the inequalities (\ref{eq:17}b) or (\ref{eq:17}e) doesn't hold with equality, we can increase $e^{x_i}$ or $e^{q_i}$ until the equality holds. If the inequalities (\ref{eq:17}c) or (\ref{eq:17}d) doesn't hold with equality, we can decrease $e^{y_i}$ or $e^{p_i}$ until the equality holds. In the mean time, the optimal value of problem (\ref{eq:17}) would also be improved. Thus, the inequalities from (\ref{eq:17}b) to (\ref{eq:17}e) would hold with equalities for the final solutions.

For the purpose of maximization of (\ref{eq:17}a), we maximize $e^{x_i}$ and $e^{q_i}$ which are the lower bound of ${ \sum^K_{k=1}(\mathbf{\overline{h}}_i^T \mathbf{W}_k \mathbf{\overline{h}}^*_i - 2 \varepsilon \| \mathbf{W}_k \mathbf{\overline{h}}^*_i\| ) +\sigma^2_i}$ and ${  \sum_{k\neq i}( \mathbf{\overline{g}}_i^T \mathbf{W}_k   \mathbf{\overline{g}}^*_i - 2 \varepsilon \| \mathbf{W}_k \mathbf{\overline{g}}^*_i\|)+\varsigma^2_i  }$, respectively, while minimizing $e^{y_i}$ and $e^{p_i}$ which are the upper bound of ${  \sum_{k\neq i}( \mathbf{\overline{h}}_i^T \mathbf{W}_k  \mathbf{\overline{h}}^*_i + 2 \varepsilon \| \mathbf{W}_k \mathbf{\overline{h}}^*_i\|)+\sigma^2_i  }$ and ${ \sum^K_{k=1}(\mathbf{\overline{g}}_i^T \mathbf{W}_k  \mathbf{\overline{g}}^*_i + 2 \varepsilon \| \mathbf{W}_k  \mathbf{\overline{g}}^*_i\| ) +\varsigma^2_i}$, as can be observed from (\ref{eq:17}b) to (\ref{eq:17}e). Thus, while solving problem (\ref{eq:17}), the lower bound of the numerator of the objective function (\ref{eq:15}a) are maximized and the upper bound of the denominator are minimized leading to the maximization of (\ref{eq:15}). As explained above, it can be seen that problem (\ref{eq:17}) is an appropriate approximation of problem (\ref{eq:18}).

It can be observed that constraints (\ref{eq:17}c) and (\ref{eq:17}d) are nonconvex resulting in difficulties for optimal solution. Let ${(\widetilde{\mathbf{W}}_k,i=1,\ldots,K ) }$ be a feasible point of problem (\ref{eq:17}). Define
\begin{subequations}\label{eq:19}
\begin{eqnarray}
\widetilde y_i \triangleq {\textrm{ln}} \left (   \sum_{k\neq i}\left ( \mathbf{\overline{h}}_i^T \widetilde{\mathbf{W}}_k  \mathbf{\overline{h}}^*_i
+ 2 \varepsilon \| \widetilde{\mathbf{W}}_k \mathbf{\overline{h}}^*_i\|\right)+\sigma^2_i    \right),   \\
\widetilde p_i \triangleq {\textrm{ln}} \left (  \sum^K_{k=1}\left ( \mathbf{\overline{g}}_i^T \widetilde{\mathbf{W}}_k  \mathbf{\overline{g}}^*_i
+ 2 \varepsilon \| \widetilde{\mathbf{W}}_k  \mathbf{\overline{g}}^*_i\| \right ) +\varsigma^2_i \right),
\end{eqnarray}
\end{subequations}
for $i=1,\ldots,K$. Then $\widetilde y_i$ and $\widetilde p_i$ are feasible to problem (\ref{eq:17}). Aiming to make (\ref{eq:17}c) and (\ref{eq:17}d) convex, these constraints are conservatively approximated at the point $(\{\widetilde y_i\}, \{\widetilde p_i\})$ based on Taylor expansion [19]. The Taylor series of a function $f(x)$ that is infinitely differentiable at a number $a$ is the power series $\sum_{n=0}^\infty \frac{f^{(n)}(a)}{n!}(x-a)^n$. Since both of $e^{y_i}$ and $e^{p_i}$ are convex, their first-order Taylor expansion at $\widetilde y_i$ and $\widetilde p_i$ are respectively given by
\begin{eqnarray}\label{eq:20}
e^{\widetilde y_i}(y_i - \widetilde y_i +1) \quad \textrm{and} \quad e^{\widetilde p_i}(p_i - \widetilde p_i +1).
\end{eqnarray}
Consequently, restrictive approximations for (\ref{eq:17}c) and (\ref{eq:17}d) are given by
\begin{subequations}\label{eq:21}
\begin{eqnarray}
\nonumber
{  \sum_{k\neq i}( \mathbf{\overline{h}}_i^T \mathbf{W}_k  \mathbf{\overline{h}}^*_i + 2 \varepsilon \| \mathbf{W}_k \mathbf{\overline{h}}^*_i\|)+\sigma^2_i  }\leq e^{\widetilde y_i}(y_i - \widetilde y_i +1 ), \\
\\\nonumber
{  \sum^K_{k=1}(\mathbf{\overline{g}}_i^T \mathbf{W}_k  \mathbf{\overline{g}}^*_i + 2 \varepsilon \| \mathbf{W}_k  \mathbf{\overline{g}}^*_i\| ) +\varsigma^2_i} \leq e^{\widetilde p_i}(p_i - \widetilde p_i +1 ).\\
\end{eqnarray}
\end{subequations}

Through the first-order Taylor expansion of $e^{y_i}$ and $e^{p_i}$, the original non-linear terms successfully turn out to be linear leading to convex constraints. By replacing (\ref{eq:17}c) and (\ref{eq:17}d) with (\ref{eq:21}a) and (\ref{eq:21}b), respectively, the following approximation of problem (\ref{eq:17}) can be obtained

\begin{subequations}\label{eq:22}
\begin{eqnarray}
&&\mathop{\mathop {\textrm{max}}_{x_i,y_i,p_i,q_i \in\mathbb{R}}}_{\mathbf{W}_{i}\in S,\forall i} \textrm{log}_2 \prod_{i} (e^{(x_i-y_i)-(p_i-q_i)}),\\ \nonumber
&&\quad \ \ \textrm{s.t.}\ \quad \ \ { \sum^K_{k=1}(\mathbf{\overline{h}}_i^T \mathbf{W}_k \mathbf{\overline{h}}^*_i - 2 \varepsilon\| \mathbf{W}_k \mathbf{\overline{h}}^*_i\| ) +\sigma^2_i} \geq e^{x_i} ,   \\
\\\nonumber
&&\qquad \relphantom{\textrm{s.t.}}\quad \ {  \sum_{k\neq i}( \mathbf{\overline{g}}_i^T \mathbf{W}_k   \mathbf{\overline{g}}^*_i - 2 \varepsilon \| \mathbf{W}_k \mathbf{\overline{g}}^*_i\|)+\varsigma^2_i  } \geq e^{q_i},   \\
\\\nonumber
&&\qquad \relphantom{\textrm{s.t.}}\quad \ {  \sum_{k\neq i}( \mathbf{\overline{h}}_i^T \mathbf{W}_k  \mathbf{\overline{h}}^*_i + 2 \varepsilon\| \mathbf{W}_k \mathbf{\overline{h}}^*_i\|)+\sigma^2_i  } \\
&&\qquad \relphantom{\textrm{s.t.}}\quad \ \ \leq e^{ \widetilde{y}_i}(y_i - \widetilde{y}_i +1 ),
\\\nonumber
&&\qquad \relphantom{\textrm{s.t.}}\quad \ {  \sum^K_{k=1}(\mathbf{\overline{g}}_i^T \mathbf{W}_k  \mathbf{\overline{g}}^*_i + 2 \varepsilon \| \mathbf{W}_k  \mathbf{\overline{g}}^*_i\| ) +\varsigma^2_i} \\
&&\qquad \relphantom{\textrm{s.t.}}\quad \ \ \leq e^{\widetilde{p}_i}(p_i - \widetilde{p}_i +1 ).
\end{eqnarray}
\end{subequations}

Problem (\ref{eq:22}) is a convex optimization problem which can be efficiently solved by CVX, a package for specifying and solving convex programs [20],[21].

In summary, the reformulation above consists of two approximation steps: a) the rank relaxation of $\mathbf{w}_{i}\mathbf{w}^H_{i}$ to $\mathbf{W}_{i}$ through
SDR, and b) constraint restrictions of (\ref{eq:17}b) and (\ref{eq:17}e) to (\ref{eq:22}b) and (\ref{eq:22}e). Note that if problem (\ref{eq:22}) yields a rank-one optimal $\left( \mathbf{W}_{1},...,\mathbf{W}_{K}  \right)$, a rank-one beamforming solution can be readily obtained by rank-one decomposition of $\mathbf{W}_{i} = \mathbf{w}_{i}\mathbf{w}^H_{i}$ for $i=1,...,K$. It is then straightforward that this rank-one beamforming solution $\left( \mathbf{w}_{1},...,\mathbf{w}_{K}  \right)$ is also feasible to the original problem (\ref{eq:14}). Otherwise, randomization technique can be applied.

\subsection{Successive Convex Approximation}

Formulation (\ref{eq:22}) is obtained by approximating problem (\ref{eq:14}) at the given feasible point ${( \widetilde {\mathbf{W}}_i,i=1,\ldots,K ) }$, as described in (\ref{eq:21}). This approximation can be further improved by iterative procedure based on the optimal solution obtained through solving (\ref{eq:22}) in the previous approximation. Specifically, in the $(n)$-th iteration, the following convex optimization problem is solved by CVX,
\begin{subequations}\label{eq:36}
\begin{eqnarray}
\nonumber
&& \mathop{\mathop {\textrm{max}}_{x_i,y_i,p_i,q_i \in\mathbb{R}}}_{\mathbf{W}_{i}\in S,\forall i} \textrm{log}_2 \prod_{i} (e^{(x_i-y_i)-(p_i-q_i)}),\\ \nonumber
&& \quad \  \textrm{s.t.}\ \quad \ \ { \sum^K_{k=1}(\mathbf{\overline{h}}_i^T \mathbf{W}_k \mathbf{\overline{h}}^*_i - 2 \varepsilon\| \mathbf{W}_k \mathbf{\overline{h}}^*_i\| ) +\sigma^2_i} \\
&& \quad \ \ \relphantom{\textrm{s.t.}}\ \quad \ \ \geq e^{x_i} , \\\nonumber
&& \qquad \relphantom{\textrm{s.t.}}\quad \ {  \sum_{k\neq i}( \mathbf{\overline{g}}_i^T \mathbf{W}_k   \mathbf{\overline{g}}^*_i - 2 \varepsilon \| \mathbf{W}_k \mathbf{\overline{g}}^*_i\|)+\varsigma^2_i  } \\
&& \quad \ \ \relphantom{\textrm{s.t.}}\ \quad \ \ \geq e^{q_i},\\\nonumber
&& \qquad \relphantom{\textrm{s.t.}}\quad \ {  \sum_{k\neq i}( \mathbf{\overline{h}}_i^T \mathbf{W}_k  \mathbf{\overline{h}}^*_i + 2 \varepsilon\| \mathbf{W}_k \mathbf{\overline{h}}^*_i\|)+\sigma^2_i  } \\
&& \qquad \relphantom{\textrm{s.t.}}\quad \ \ \leq e^{\widetilde{y}_i [n]}(y_i - \widetilde{y}_i [n] +1 ),
\\\nonumber
&& \qquad \relphantom{\textrm{s.t.}}\quad \ {  \sum^K_{k=1}(\mathbf{\overline{g}}_i^T \mathbf{W}_k  \mathbf{\overline{g}}^*_i + 2 \varepsilon \| \mathbf{W}_k  \mathbf{\overline{g}}^*_i\| ) +\varsigma^2_i} \\
&& \qquad \relphantom{\textrm{s.t.}}\quad \ \ \leq e^{\widetilde{p}_i [n]}(p_i - \widetilde{p}_i [n] +1 ).
\end{eqnarray}
\end{subequations}
The optimal solution of (\ref{eq:36}) is denoted as $\{ \widehat{\mathbf{W}}_k[n],\widehat{x}_i[n],\widehat{y}_i[n],\widehat{p}_i[n],\widehat{q}_i[n],i=1,\ldots,K  \}$. Then, through (\ref{eq:35}),
\begin{subequations}\label{eq:35}
\begin{eqnarray}
\nonumber
&& \widetilde y_{i}[n+1] \triangleq {\textrm{ln}} \Big ( \{  \sum_{k\neq i}\big( \mathbf{\overline{h}}_i^T \widehat{\mathbf{W}}_{k}[n] \mathbf{\overline{h}}^*_i \\
&& \relphantom{\widetilde y_{i}[n-1]\triangleq}+ 2 \varepsilon \| \widehat{\mathbf{W}}_{k}[n] \mathbf{\overline{h}}^*_i\|\big)+\sigma^2_i  \}  \Big),   \\\nonumber
&& \widetilde p_{i}[n+1] \triangleq {\textrm{ln}} \Big ( \{  \sum^{K}_{k=1}\big(\mathbf{\overline{g}}_i^T \widehat{\mathbf{W}}_{k}[n]  \mathbf{\overline{g}}^*_i \\
&& \relphantom{\widetilde y_{i}[n-1]\triangleq}+ 2 \varepsilon \| \widehat{\mathbf{W}}_{k} [n] \mathbf{\overline{g}}^*_i\| \big) +\varsigma^2_i\} \Big).
\end{eqnarray}
\end{subequations}
$\widetilde y_i[n+1]$ and $\widetilde p_i[n+1]$ would be used to form the problem in the $(n+1)$-th iteration. Thus, the iterative process continues until it converges.

To get the initial values $\widetilde y_i[1]$ and $\widetilde p_i[1]$, we first generate $\widehat{\mathbf{w}}_i[0]$  randomly, and calculate $\widehat{\mathbf{W}}_i[0]$ by $\widehat{\mathbf{W}}_i[0] = \widehat{\mathbf{w}}_i[0] \widehat{\mathbf{w}}^H_i[0]$. Through (\ref{eq:35}), $\widetilde y_i[1]$ and $\widetilde p_i[1]$ could be obtained.

The randomly generated $\widehat{\mathbf{w}}_i[0]$ should be checked whether it is suitable for the iteration process. The basic idea is that $x_i$, $y_i$, $p_i$ and $q_i$ should be larger than zero. The solutions of the first iteration $\{ \widehat{\mathbf{W}}_k[1],\widehat{x}_i[1],\widehat{y}_i[1],\widehat{p}_i[1],\widehat{q}_i[1]  \}$ can be obtained based on the initial values $\widetilde y_i[1]$ and $\widetilde p_i[1]$. If one of $\widehat{x}_i[1],\widehat{y}_i[1],\widehat{p}_i[1]$ and $\widehat{q}_i[1]$ is negative, that means the randomly generated  $\widehat{\mathbf{w}}_i[0]$ is not appropriate. $\widehat{\mathbf{w}}_i[0]$ should be generated randomly again and the initial values $\widetilde y_i[1]$ and $\widetilde p_i[1]$ should be updated accordingly. The process of checking the initial values should be applied for another time. If $\widehat{x}_i[1],\widehat{y}_i[1],\widehat{p}_i[1]$ and $\widehat{q}_i[1]$ are positive, the iteration process could continue until it converges. notice that in (22a) and (22d), the terms $\mathbf{\overline{h}}_i^T \mathbf{W}_k \mathbf{\overline{h}}^*_i - 2 \varepsilon \| \mathbf{W}_k \mathbf{\overline{h}}^*_i\|$ could influence results of the checking process of the initial values. It is quite easy to generate appropriate initial values if the bound of the norm of the channel estimation error $\varepsilon$ is small. If $\varepsilon$ is large, it might be hard to get suitable initial values.

The proposed successive convex approximation (SCA) algorithm is summarized in Algorithm 1.

\begin{algorithm}
\caption{: SCA algorithm for solving problem in (\ref{eq:22}). } \label{algo_minballd}
\begin{algorithmic}[1]
\STATE \textbf{Given} $\{ \widehat{\mathbf{w}}_{i} , $i=1,...,K$ \} $ that are feasible to (\ref{eq:22}).
\STATE Set $\widehat{\mathbf{W}}_i [0]=\widehat{\mathbf{w}}_{i} \widehat{\mathbf{w}}^H_{i}$ for $i=1,...,K$, and set $n=0$.
\STATE \textbf{repeat}
\STATE \quad \quad obtain $\widetilde{y}_i [n+1]$ and $\widetilde{p}_i [n+1]$ through (\ref{eq:35})
\STATE \quad \quad $n=n+1$,
\STATE \quad \quad solve problem in (\ref{eq:36}) to get the optimal solution \\ \quad \quad $\widehat{\mathbf{W}}_i [n]$
\STATE \quad \quad compute the optimal value of problem in (\ref{eq:22})
\STATE \textbf{until} the stopping criterion is met. 
\STATE obtain $\mathbf{w}^{\ast}_{i}$ by decomposition of $\widehat{\mathbf{W}}_i [n] = (\mathbf{w}^{\ast}_{i} )(\mathbf{w}^{\ast}_{i})^H$ for all $i$ in the case of $\textrm{rank}(\widetilde{\mathbf{W}}_i [n])=1$; otherwise the randomization technology [11] would be utilized to get a rank-one approximation.
\end{algorithmic}
\end{algorithm}

\subsection{Convergence Analysis}
In fact, iterative process ensures monotonic improvement of SSR.

\begin{theorem}
The sequence $\{ \widetilde y_i[n] , \widetilde p_i[n]  \}$, the solutions $\{ \widehat{x}_i[n],\widehat{y}_i[n],\widehat{p}_i[n],\widehat{q}_i[n],i=1,\ldots,K  \}$ generated by Algorithm 1 and the optimal values in the iterations converge.
\end{theorem}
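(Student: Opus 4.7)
The plan is to establish two properties in sequence: monotonicity of the optimal values in (\ref{eq:36}) across iterations, and boundedness of the same sequence. Convergence of the other sequences will then be a corollary.

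First I would record the key inclusion between feasible sets. By convexity of the exponential, one has $e^y \geq e^{\widetilde y}(y - \widetilde y + 1)$ for every $y, \widetilde y \in \mathbb{R}$, and analogously for $e^p$. Hence whenever $(\mathbf{W}_i, x_i, y_i, p_i, q_i)$ satisfies the linearized constraints (\ref{eq:36}d)--(\ref{eq:36}e), it automatically satisfies the original constraints (\ref{eq:17}c)--(\ref{eq:17}d); that is, the feasible set of (\ref{eq:36}) is an inner approximation of that of (\ref{eq:17}).

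Second I would prove monotonicity. Let $\{\widehat{\mathbf{W}}_k[n], \widehat{x}_i[n], \widehat{y}_i[n], \widehat{p}_i[n], \widehat{q}_i[n]\}$ be the optimum at iteration $n$. From the discussion after (\ref{eq:17}), constraints (\ref{eq:17}c)--(\ref{eq:17}d) are active at the optimum of (\ref{eq:36}); otherwise $e^{y_i}$ or $e^{p_i}$ could be decreased to strictly improve the objective. The update (\ref{eq:35}) therefore gives $\widetilde{y}_i[n+1] = \widehat{y}_i[n]$ and $\widetilde{p}_i[n+1] = \widehat{p}_i[n]$. Plugging $y_i = \widehat{y}_i[n]$ into (\ref{eq:36}d) at iteration $n+1$ yields
\begin{equation*}
e^{\widetilde{y}_i[n+1]}(\widehat{y}_i[n] - \widetilde{y}_i[n+1] + 1) = e^{\widehat{y}_i[n]},
\end{equation*}
the same value that appeared at iteration $n$, and similarly for $\widehat{p}_i[n]$. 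Thus the old optimum is feasible in the new iteration, so the optimal value of (\ref{eq:36}) is non-decreasing in $n$.

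Third I would establish boundedness from above. The sum-power constraint makes $S$ compact, and the discussion around (\ref{eq:42}) shows $\sigma_i^2 \leq e^{x_i}, e^{y_i}$ and $\varsigma_i^2 \leq e^{p_i}, e^{q_i}$, with upper bounds induced by the power budget. Hence the objective $\sum_i[(x_i - y_i) - (p_i - q_i)]\log_2 e$ is uniformly bounded. Combining monotonicity with boundedness, the optimal values converge by the monotone convergence theorem. Since $\widetilde{y}_i[n+1] = \widehat{y}_i[n]$, $\widetilde{p}_i[n+1] = \widehat{p}_i[n]$, and all iterates lie in a compact set, convergence of the objective pins down the convergence of $\{\widetilde{y}_i[n], \widetilde{p}_i[n]\}$ and of the solutions as claimed.

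The main obstacle I anticipate is the step that identifies $\widetilde{y}_i[n+1]$ with $\widehat{y}_i[n]$: one must justify activeness of (\ref{eq:17}c)--(\ref{eq:17}d) at every iteration's optimum, which depends on strict monotonicity of the objective in $y_i, p_i$ and on the linearized constraints being slack enough to admit the required small perturbations. If one wanted to upgrade this from convergence of values to full convergence of the iterate sequence (rather than subsequential limits), an additional Zangwill-type closed-map argument or a KKT analysis at limit points would be needed.
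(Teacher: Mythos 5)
Your proposal has two genuine problems. First, the identity on which your monotonicity step rests is false: at the optimum of (\ref{eq:36}) it is the \emph{linearized} constraints (\ref{eq:36}d)--(\ref{eq:36}e) that are active, not the original constraints (\ref{eq:17}c)--(\ref{eq:17}d). Activeness of (\ref{eq:36}d) together with the update (\ref{eq:35}) gives
\begin{equation*}
e^{\widetilde y_i[n+1]} \;=\; e^{\widetilde y_i[n]}\bigl(\widehat y_i[n]-\widetilde y_i[n]+1\bigr)\;\leq\; e^{\widehat y_i[n]},
\end{equation*}
i.e.\ only $\widetilde y_i[n+1]\leq \widehat y_i[n]$ (with equality exactly at a fixed point $\widehat y_i[n]=\widetilde y_i[n]$), not $\widetilde y_i[n+1]=\widehat y_i[n]$. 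Your feasibility-of-the-old-optimum argument, and hence the non-decrease of the optimal values, can still be salvaged from the inequality (the right-hand side of the new constraint at $y_i=\widehat y_i[n]$ is then at least the left-hand side), but as written the justification is incorrect.

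Second, and more importantly, the theorem claims convergence of the sequences $\{\widetilde y_i[n],\widetilde p_i[n]\}$ and of the solutions $\{\widehat x_i[n],\widehat y_i[n],\widehat p_i[n],\widehat q_i[n]\}$ themselves, and your argument does not deliver this: convergence of the optimal values plus compactness of the feasible set yields only subsequential limits of the iterates, as you yourself concede in your final remark. The paper closes exactly this gap by a different decomposition: it shows the sandwich $\widetilde y_i[n+1]\leq \widehat y_i[n]\leq \widetilde y_i[n]$ (and the analogue for $p$), so that $\widetilde y_i[n],\widetilde p_i[n]$ are monotone non-increasing and bounded below (via $e^{y_i}\geq\sigma_i^2$, $e^{p_i}\geq\varsigma_i^2$), hence convergent; the squeeze then forces $\widehat y_i[n],\widehat p_i[n]$ to converge; feasibility of the $n$-th solution in the $(n+1)$-th problem then makes $\widehat x_i[n],\widehat q_i[n]$ monotone non-decreasing and bounded by the power constraint, hence convergent; convergence of the optimal values follows as the last step rather than the first. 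To prove the stated theorem you would need to either adopt this monotone-sequence argument or actually carry out the Zangwill/KKT limit-point analysis you only allude to.
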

\begin{proof}
Let $\{ \widehat{\mathbf{W}}_i[n],\widehat{x}_i[n],\widehat{y}_i[n],\widehat{p}_i[n],\widehat{q}_i[n],i=1,\ldots,K  \}$ denote the solutions of problem (\ref{eq:36}) in the $n$-th iteration. It should be noticed that $\widetilde y_i[n]$ and $\widetilde p_i[n]$ are feasible to problem (\ref{eq:36}). Due to the form of the objective function (\ref{eq:36}a), the optimal solution $\widehat{y}_i[n]$ and $ \widehat{p}_i[n]$ must satisfy the following,
\begin{eqnarray}\label{eq:37}
\widehat{y}_i[n] \leq \widetilde y_i[n] \quad \textrm{and} \quad \widehat{p}_i[n] \leq \widetilde p_i[n].
\end{eqnarray}
All the inequalities in (\ref{eq:36}b) to (\ref{eq:36}e) would hold with equalities at the optimal points. According to (\ref{eq:36}d), we have

\begin{eqnarray}\label{eq:38}
\nonumber
&& e^{\widetilde y_i[n+1]} = {  \sum_{k\neq i}( \mathbf{\overline{h}}_i^T \widehat{\mathbf{W}}_k[n]  \mathbf{\overline{h}}^*_i + 2 \varepsilon\| \widehat{\mathbf{W}}_k [n] \mathbf{\overline{h}}^*_i\|)+\sigma^2_i  } \\\nonumber
&&  \relphantom{e^{\overline y_i[n+1]} } = e^{\widetilde{y}_i [n]}(\widehat{y}_i[n] - \widetilde{y}_i [n] +1 )  \\
&& \relphantom{e^{\overline y_i[n+1]} } \leq e^{\widehat{y}_i[n]}.
\end{eqnarray}
The inequality in (\ref{eq:38}) holds because $e^{\widehat{y}_i[n]}$ is approximated with its first-order Taylor expansion, $e^{\widetilde{y}_i [n]}(\widehat{y}_i[n] - \widetilde{y}_i [n] +1 )$. Then we have $\widetilde y_i[n+1] \leq \widehat y_i[n]$ and similarly $\widetilde p_i[n+1] \leq \widehat p_i[n]$.

During the iterative process, $\widetilde y_i[n+1] \leq \widetilde y_i[n]$ and $\widetilde p_i[n+1] \leq \widetilde p_i[n]$, which means $\widetilde y_i[n]$ and $\widetilde p_i[n]$ are monotonic. As illustrated in section III-A, the following inequalities are satisfied,
\begin{eqnarray}\label{eq:43}
\infty > \ e^{y_i} \geq \sigma^2_i \quad \textrm{and} \quad \infty > \ e^{p_i} \geq \varsigma^2_i.
\end{eqnarray}
 Hence $\widetilde y_i[n]$ and $\widetilde p_i[n]$ are bounded. It can be concluded that $\widetilde y_i[n]$ and $\widetilde p_i[n]$ would converge. When the iteration index $n$ is large enough, the two problems solved in the $n$-th iteration and in the $(n+1)$-th iteration respectively would be almost the same.

 We have $\widetilde y_i[n+1] \leq \widehat y_i[n] \leq \widetilde y_i[n] $ and similarly $\widetilde p_i[n+1] \leq \widehat p_i[n] \leq  \widetilde p_i[n]$. As $\widetilde y_i[n]$ and $\widetilde p_i[n]$ converge, the solutions $\widehat y_i[n] $ and $\widehat p_i[n]$ would also converge. Since $\widehat y_i[n+1] \leq \widehat y_i[n]  $ and $\widehat p_i[n+1] \leq \widehat p_i[n]$, the solutions $\widehat x_i[n], \widehat y_i[n], \widehat p_i[n], \widehat q_i[n]$ obtained in the n-th iteration would be feasible in the (n+1)-th iteration. Thus, we have $\widehat x_i[n+1] \geq \widehat x_i[n]  $ and $\widehat q_i[n+1] \geq \widehat q_i[n]$ due to the form of the object function. As the result of the limited transmit power, $\widehat x_i[n]$ and $\widehat q_i[n]$ are bounded. It can be concluded that $\widehat x_i[n]$ and $\widehat q_i[n]$ would converge. The solutions $\{ \widehat{x}_i[n],\widehat{y}_i[n],\widehat{p}_i[n],\widehat{q}_i[n],i=1,\ldots,K  \}$ generated by Algorithm 1 converge. As a result, the optimal values in the iterations converge.
\end{proof}

Problem (9) does not consider the effect of the CSI uncertainties. To involve the effect of the CSI uncertainties, the lower bound of SSR is considered and problem (20) is formulated. Since problem (20) is quite complex and nonconvex, it is hard to get the solutions of problem (20). we use approximation method to solve problem (20). However, the performance loss induced by approximation method is still unknown.

In the system model, it is assumed that $N_t\geq K$. The BS with $N_t$ transmit antennas would be able to send at most $N_t$ data streams at one time which means it can communicate with $K$ single antenna users where $N_t\geq K$. If $N_t < K$, the users would not be able to remove the interference or get their own message. It seems that the approximation scheme proposed above can only protect a small number of users at one time in the case of small number of transmit antennas. In the scenario with massive MIMO, the base station is able to be equipped with hundreds of antennas [22]. Then the proposed scheme can protect a lot more users.

\section{Power Allocation Based on ZF Beamforming}
In this section, for the case of imperfect CSI, the SSR's lower bound maximization problem under the assumption of ZF based beamforming method at the BS is studied.

To apply the algorithm based on ZF beamforming, the relationship between $N_t$ and $K$ has to satisfy $N_t\geq 2K$. Thus, according to the relationship of $N_t$ and $K$, two cases would be discussed.
\subsection{Case of $N_t\geq 2K$}
In this case, the BS will be able to provide enough degree of freedom to nullify the eavesdroppers' rate. For the beamforming vector $\mathbf{w}_i$, the beamforming direction is defined as $\frac{\mathbf{w}_i}{\|\mathbf{w}_i\|}$, while the beamforming power is denoted as $\|\mathbf{w}_i\|^2$. We have $P_i=\|\mathbf{w}_i\|^2$, where $P_i$ denotes the power allocated to the beamforming vector $\mathbf{w}_i$. Then, $\mathbf{w}_i$ can be denoted as
\begin{eqnarray}
\mathbf{w}_i = \frac{\mathbf{w}_i}{\|\mathbf{w}_i\|}\sqrt{P_i}.
\end{eqnarray}

Based on the discussion above, the original problem would be divided into two subproblems, design of the beamforming direction and allocation of the beamforming power, respectively.
\subsubsection{Design of the Beamforming Direction}
This part would discuss the beamforming direction design problem.

Let $\overline{\mathbf{H}} \in \mathbb{C}^{N_t\times 2K }$ denote the channel between the BS and the $K$ user-eaves pairs, where $\overline{\mathbf{H} }= \big( \overline{\mathbf{h}}^*_1\quad \ldots\quad \overline{\mathbf{h}}^*_K \quad \overline{\mathbf{g}}^*_1 \quad\ldots\quad \overline{\mathbf{g}}^*_K  \big)$. Applying ZF based beamforming method needs to compute $\overline{\mathbf{H}}^\dag$, which is the Moore-Penrose pseudo-inverse matrix of $\overline{\mathbf{H}}$. $\mathbf{v}_i^T$ is denoted as the $i$-th row of $\overline{\mathbf{H}}^\dag$. Since $\overline{\mathbf{H}} \in \mathbb{C}^{N_t\times 2K }$ and $N_t\geq 2K$, $\overline{\mathbf{H}}^\dag \overline{\mathbf{H}} = \mathbf{I}$ will hold, where $\mathbf{I} \in \mathbb{C}^{2K\times 2K }$ is a identity matrix.

It should be noticed that for $i=1,\ldots,K$ and $j=1,\ldots,K$,
\begin{eqnarray}\label{eq:23}
\mathbf{v}_i^T \overline{\mathbf{h}}^*_j = \left\{ \begin{array}{ll}
1, & i = j,\\
0, & i\neq j,
\end{array} \right.
\end{eqnarray}
and
\begin{eqnarray}\label{eq:24}
\mathbf{v}_i^T \overline{\mathbf{g}}^*_j = 0.
\end{eqnarray}

$\frac{\mathbf{v}_i^*}{ \| \mathbf{v}_i\|}$ is used as the beamforming direction of $\mathbf{w}_i$. Then, the beamforming vector $\mathbf{w}_i $ can be denoted as the following.
\begin{eqnarray}\label{eq:25}
\mathbf{w}_i = \frac{\mathbf{v}_i^*}{\| \mathbf{v}_i \|} \sqrt{P_i}.
\end{eqnarray}

Applying (\ref{eq:23}) and (\ref{eq:24}) in (\ref{eq:14}a), the SSR can be expressed as
\begin{eqnarray}\label{eq:26}
\sum_{i=1}^{K} \textrm{log}_2 \left( {1 + \frac{ \mathbf{\overline{h}}_i^T \mathbf{W}_i \mathbf{\overline{h}}^*_i - 2 \varepsilon \| \mathbf{W}_i \mathbf{\overline{h}}^*_i\|  }{ \sigma_i^2} } \right),
\end{eqnarray}
which means the absolute cancellation of eavesdroppers' rate and intracell interference due to ZF based beamforming method. This mainly results from the orthogonality of $\mathbf{v}_i^T \overline{\mathbf{h}}^*_j$ and $\mathbf{v}_i^T \overline{\mathbf{g}}^*_j$. By substituting (\ref{eq:25}) into (\ref{eq:26}), the optimization problem turns into the following expression
\begin{subequations}\label{eq:27}
\begin{eqnarray}
&&\mathop{\textrm{max}}_{P_i\geq 0,\forall i} \  \sum_{i=1}^{K} \textrm{log}_2 \big( {1 + \frac{  \left( 1-2\varepsilon_h\|\mathbf{v}_i\|   \right )P_i }{ \| \mathbf{v}_i \|^2 \sigma_i^2 } } \big),  \\
&&\ \ \textrm{s.t.} \  \quad \sum_{i=1}^{K} P_i \leq P.
\end{eqnarray}
\end{subequations}

The ZF based beamforming method would prefix the beamforming direction and then allocate power. On the contrast, the origin problem (\ref{eq:14}) would jointly optimize beamforming direction and power allocation.

\subsubsection{Allocation of the Beamforming Power}
After solving the beamforming direction problem, the original problem turns into a power allocation problem, as illustrated by (\ref{eq:27}). The solution of this problem would be demonstrated in this part.

Noticing that the objective function is nonconvex, the approximation method presented in the previous section can be utilized. Specifically, exponential variables would come into use to substitute the nonconvex terms in (\ref{eq:27}), and the problem would be reformulated as follows
\begin{subequations}\label{eq:28}
\begin{eqnarray}
&&\mathop{\textrm{max}}_{P_i\geq 0,\forall i} \ \textrm{log}_2 \prod_{i=1}^{K} e^{z_i}, \\
&&\ \  \textrm{s.t.}\ \quad \big( {1 + \frac{  \left( 1-2\varepsilon_h\|\mathbf{v}_i\|   \right )P_i }{ \| \mathbf{v}_i \|^2 \sigma_i^2 } } \big) \geq e^{z_i},\\
&&\ \  \ \relphantom{\textrm{s.t.}}\quad \sum_{i=1}^{K} P_i \leq P.
\end{eqnarray}
\end{subequations}

Problem (\ref{eq:28}) is convex. The exact numerical solution of problem (\ref{eq:28}) can be found without iteration process through standard convex solvers such as CVX.

In addition to the solution of an optimal problem obtained from standard convex solvers, the closed-form solution of (\ref{eq:27}) could be derived. To be specific, the Lagrangian function as the following should be considered
\begin{eqnarray}\label{eq:29}
\nonumber
&&\mathfrak{L}(\lambda,P_i,i=1,\ldots,K) = \sum_{i=1}^{K} \textrm{log}_2 \big( {1 + \frac{ \left( 1-2\varepsilon_h\|\mathbf{v}_i\|   \right ) P_i }{ \| \mathbf{v}_i \|^2 \sigma_i^2 } } \big) \\
&&\relphantom{\mathfrak{L}(\lambda,P_i,i=1,\ldots,K)=}- \lambda \sum_{i=1}^{K} P_i,
\end{eqnarray}
where $\lambda$ is the Lagrange multiplier.

The KKT condition of this optimal power allocation problem is
\begin{eqnarray}\label{eq:30}
\frac{\partial \mathfrak{L}(\lambda)}{\partial P_i} \left\{ \begin{array}{ll}
=0, & P_i \geq 0,\\ 
\leq0, & P_i = 0.
\end{array} \right.
\end{eqnarray}

After solving (\ref{eq:30}), the power allocation scheme as the following equations (\ref{eq:31}) satisfies the KKT condition, which achieves the maximal SSR.
\begin{eqnarray}\label{eq:31}
P_i = \big( \frac{1}{\lambda} - \frac{ \| \mathbf{v}_i \|^2 \sigma_i^2 }{\left( 1-2\varepsilon_h\|\mathbf{v}_i\|   \right )} \big) ^+.
\end{eqnarray}
It should be noticed that the the Lagrange multiplier, $\lambda$, should satisfy the sum power constraint as explained in (\ref{eq:32}).
\begin{eqnarray}\label{eq:32}
\sum_{i=1}^{K} \big( \frac{1}{\lambda} - \frac{ \| \mathbf{v}_i \|^2 \sigma_i^2 }{\left( 1-2\varepsilon_h\|\mathbf{v}_i\|   \right )} \big) ^+ = P.
\end{eqnarray}

According to (\ref{eq:23}), if the channel between the BS and the $i$-th user is under good condition, then $\|\mathbf{v}_i\|$ usually keeps small, which would lead to large $P_i$ as demonstrated in (\ref{eq:31}). On the other side, the poor channel condition would introduce large $\|\mathbf{v}_i\|$, and as a result, the allocated power would be small or even zero.

It should be noticed that, in the case of $1-2\varepsilon_h\|\mathbf{v}_i\|\leq 0$, due to (\ref{eq:27}), for the purpose of maximizing the SSR, $P_i$ should be set to zero, since any positive value of $P_i$ would lead to a reduction of the objective.

The power allocation process is essentially water-filling scheme. Generally, the better the channel state of a user is, the more power is allocated to the beamforming vector of the user to make full use of better channel. On the contrary, power allocated to the beamforming vectors of users with worse channels will be less or even none.

\subsection{Case of $N_t < 2K$}
In this case, the BS won't be able to provide enough degree of freedom to nullify the eavesdroppers' rate.

It should be observed that to eliminate the eavesdroppers' rate and intracell interference absolutely, the relationship of the BS antenna number $N_t$ and the user-eaves pair number $K$ have to satisfy $N_t \geq 2K$; otherwise (\ref{eq:23}) and (\ref{eq:24}) won't hold and we won't be able to ensure that every eavesdropper's rate keeps zero. If $N_t < 2K$, user selection can be utilized to select $\hat{K}$ users satisfying $N_t = 2\hat{K}$.

$\hat K$ users need to be selected out of $K$ users satisfying $\hat{K} = \frac{N_t}{2}$ in the situation of $N_t < 2K$. Let $U_i,i=1,\ldots,C(K,\hat K)$ denote the user set composed of $\hat K$ users from the $K$ users. Let $R(U_i)$ denote the maximal SSR calculated through solving (\ref{eq:27}) as demonstrated in the case of $N_t \geq 2K$, after the user set $U_i$ is selected. The user set achieving maximum SSR through user selection can be represented as
\begin{eqnarray}\label{eq:33}
U_{opt} = \textrm{arg}\mathop{\textrm{max}}_{ {i=1,\ldots,C(K,\hat K)}} \big\{ R(U_i) \big\}.
\end{eqnarray}

Exhaustive search would be a simple method to get the solution of (\ref{eq:33}). However, the complexity of exhaustive search would be quite high. It is proposed to use a low complexity method as illustrated below.

The user-eaves pair contrast ratio is defined as
\begin{eqnarray}\label{eq:34}
u_i =  \frac{\|\mathbf{h}_i\|^2}{\|\mathbf{g}_i \|^2} .
\end{eqnarray}
The user-eaves pair with large $u_i$ is deemed to be able to acquire high secrecy rate. Based on this assumption, the $K$ user-eaves pairs are arranged in order of $u_i$. Thus, the $\hat K$ users with better $u_i$ would be selected.

\section{Numerical Results and Discussions}
Numerical results are demonstrated in this section so as to verify the effectiveness of our proposed method. We have $ \sigma_i^2 = \varsigma_i^2 = \sigma^2,i=1,\ldots,K $ and $\varepsilon_h = \varepsilon_g = \varepsilon$. For simplicity of expression, we use a vector $[N_t \ K \ \varepsilon]$ to denote the antenna number equipped at the BS, the number of user-eaves pairs served by the BS and the bound of the norm of the channel estimation error. For instance, [4 2 0.1] represents that the BS is equipped with 4 antennas and serves two users each with an eavesdropper while the bound of the norm of the channel estimation error is 0.1. Let $P$ denote the sum transmitting power of all the beamforming vectors. The SNR is defined as $\frac{P}{\sigma^2}$. $N_t$ and $K$ denote the number of BS antennas and user-eaves pairs, respectively.

\subsection{Demonstration of SSR Performance with Perfect CSI}

Firstly, the SSR under different configurations of $N_t$ and $K$ for the case of perfect CSI are demonstrated. It is noticed that to investigate the case of perfect CSI, we only need to set $\varepsilon = 0$. In fact, our approximation is to optimize the lower bound of the practical SSR. In the case of  $\varepsilon = 0$, the practical SSR would be maximized based on our methods. We would first investigate the performance under perfect CSI and get some insights about our methods.

Fig. 2 shows the results of Taylor expansion based method. $N_t$ is fixed to number 16 and 8 while $K$ changes. It can be seen from Fig. 2 that as the sum power grows, the SSR increases. Increasing sum power will lead to a larger feasible set for the problem (\ref{eq:7}), and it will achieve better performance for SSR.

Then, we examine the performance of SSR achieved by our proposed two algorithms and make a comparison to the traditional SLNR method. As presented in Fig. 3 and Fig. 4, in which we set $K$ to 2 and 4, respectively, it can be observed that our two algorithms outperform the existing SLNR based beamforming method. Besides, the Taylor expansion based algorithm achieves better performance than the ZF based algorithm, which demonstrates that the freedom to optimize beamforming direction would offer performance increase on SSR.

From Fig. 3 and Fig. 4, it can be observed that for the same configurations of $N_t$ and $K$, the two curves of SSR vs SNR obtained from Taylor expansion and ZF based methods, respectively, are nearly linear and keep parallel. The steady gap between two curves could illustrate that the ability of optimizing beamforming direction would provide steady or even fixed gain on SSR as sum power increases. It is also noticed that for the same $K$, as $N_t$ enlarges, the gap between these two algorithms decreases rapidly, which indicates that in the situation of large BS antenna number with small number of user-eaves pairs, the two algorithms achieve almost the same performance.

\begin{figure}[!htbp]
\centering
\includegraphics[scale=0.6]{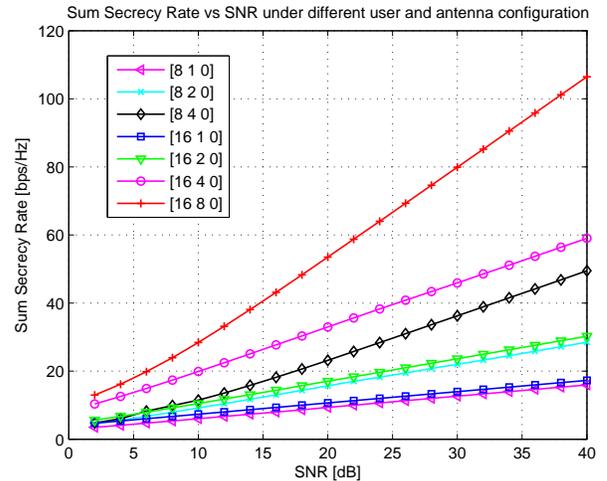}
\caption{Sum Secrecy Rate vs SNR under different user and antenna configuration for Taylor expansion based method.}
\label{fig_FixN16}
\end{figure}

\begin{figure}[!htbp]
\centering
\includegraphics[scale=0.57]{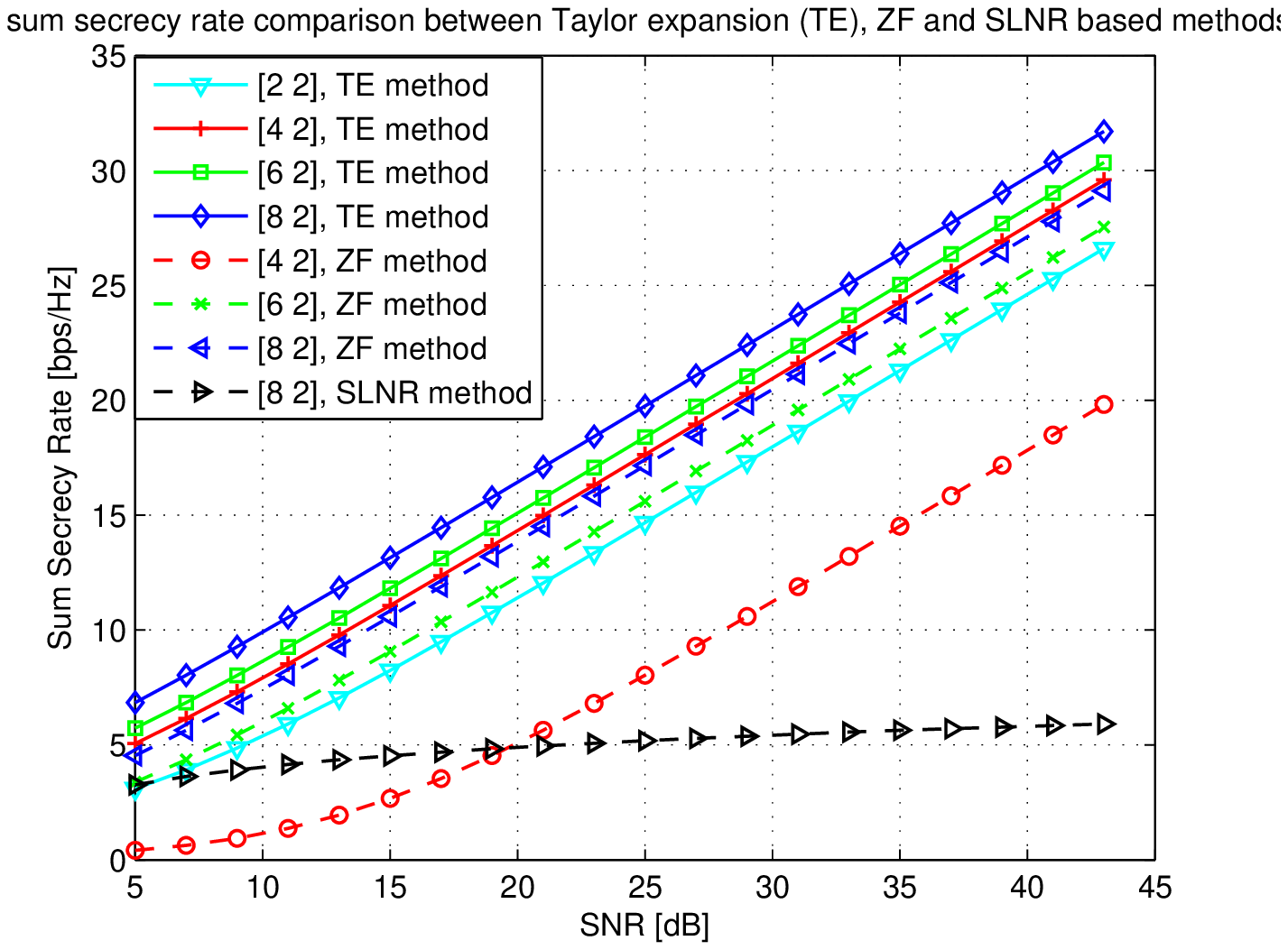}
\caption{sum secrecy rate comparison between Taylor expansion, ZF and SLNR based methods for $K=2$.}
\label{fig_comparison2user}
\end{figure}

\begin{figure}[!htbp]
\centering
\includegraphics[scale=0.57]{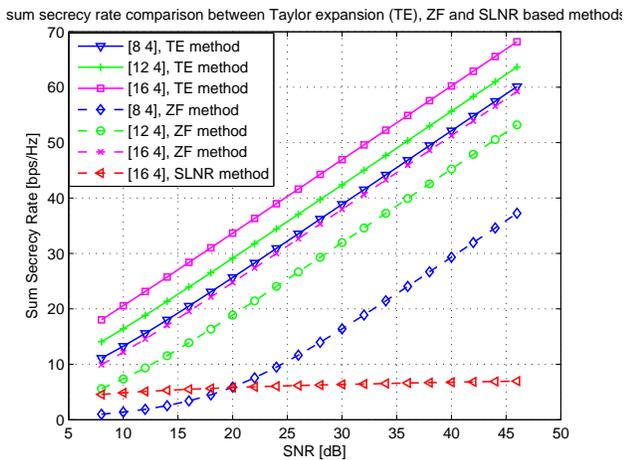}
\caption{sum secrecy rate comparison between Taylor expansion, ZF and SLNR based methods for $K=4$.}
\label{fig_comparison4user}
\end{figure}

\subsection{Demonstration of SSR Performance with Imperfect CSI}

In this case, the performance of SSR under imperfect CSI would be investigated. The results are quite similar to those of the case of perfect CSI. Fig. 5 demonstrates the lower bound of SSR based on Taylor expansion method for $\varepsilon=0.1$. It can be seen from Fig. 5 that as the sum power grows, the lower bound of SSR increases. Increasing sum power will lead to a larger feasible set for the problem (\ref{eq:22}), and it will achieve better performance for the lower bound of SSR.

Notice that in the process of Taylor expansion based approximation method, we get the solutions $\{ \mathbf{W}_i,i=1,\ldots,K  \}$ after solving problem (29). Then we need to compute $\{ \mathbf{w}_i,i=1,\ldots,K  \}$ based on $\{ \mathbf{W}_i,i=1,\ldots,K  \}$. If $ \textrm{rank}(\mathbf{W}_i)=1$, eigenvalue decomposition would be utilized to obtain the practical optimal solution, which is $\mathbf{W}_{i} = \mathbf{w}_{i}\mathbf{w}^H_{i}$ for $i=1,...,K$. However, in most cases, $\{ \mathbf{W}_i,i=1,\ldots,K  \}$ won't satisfy the rank-one condition. Then randomization technique would be applied to get practical solution $\{ \mathbf{w}_i,i=1,\ldots,K  \}$. In Fig. 6, we demonstrate the effect of randomization technique on the lower bound of SSR. In Fig.6, the label "no rand" means that $\{ \mathbf{W}_i,i=1,\ldots,K  \}$ obtained after solving problem (29) is used to calculate the lower bound of SSR directly while the label "rand" denotes that $\{ \mathbf{w}_i,i=1,\ldots,K  \}$ obtained through randomization technique based on $\{ \mathbf{W}_i,i=1,\ldots,K  \}$ is used to compute the lower bound of SSR. As can be observed from Fig. 6, the lower bound of SSR with randomization technique would be lower than the lower bound of SSR without randomization technique. However, the performance gap is quite small and the SDR approximation is effective.

Then, we illustrate the performance of the lower bound of SSR achieved by our proposed two algorithms and make a comparison to the traditional SLNR method. As presented in Fig. 7 and Fig. 8, in which we set $\varepsilon$ to 0.1 and 0.2, respectively, it can be observed that our two algorithms outperform the SLNR based beamforming method. Besides, the Taylor expansion based algorithm achieves better performance than the ZF based algorithm, which demonstrates that the freedom to optimize beamforming direction would offer performance increase on the lower bound of SSR. For the same configurations of $N_t$ and $K$, there is a steady gap between the two curves of the lower bound of SSR vs SNR obtained from Taylor expansion and ZF methods, respectively. The steady gap between two curves could demonstrate that the ability of optimizing beamforming direction would provide steady gain on the lower bound of SSR as sum power increases. It is also noticed that for the same $K$, as $N_t$ enlarges, the gap between these two algorithms decreases rapidly.

Fig. 9 shows the convergence of SCA. As presented in Fig.9, our proposed Taylor expansion based method usually converge after only a few iterations. The algorithm has quick convergence, which is beneficial to obtain the optimal value.

We have demonstrated the performance of the lower bound of SSR. Next, we will illustrate the practical SSR performance based on our methods. The practical SSR is calculated through (\ref{eq:8}a), where the channel is real and the beamforming vectors are obtained by our Taylor expansion based method to maximize the lower bound of SSR. Notice that optimization of the lower bound of SSR is based on the channel with estimation error. In addition, the theoretical SSR can be investigated. The theoretical SSR is still obtained by the optimization method of the lower bound of SSR based on Taylor expansion. However, in theoretical SSR,  the practical channel is used and the estimation error is set to 0. The difference between the theoretical SSR and the lower bound of SSR is the channel and the  estimation error. The theoretical SSR utilizes the real channel while the lower bound of SSR uses the estimated channel, where the estimation error is norm bounded. In theoretical SSR, since the estimation error is set to 0, the method in III would obtain SSR rather than the lower bound of SSR.

As can be observed in Fig. 10, for the same configuration of $N_t$ and $K$, the theoretical SSR is usually larger than the lower bound of SSR and the practical SSR under the same power. The theoretical SSR actually denotes the upper bound of SSR for the channel. The SSR based on the estimated channel should not be larger than it. Also, it can be seen that when the SNR is low, the practical SSR is larger than the lower bound of SSR, which is consistent with our expectations. However, when the SNR is high, the practical SSR would be smaller than the lower bound of SSR. It is because when we approximate the lower bound and the upper bound of $|\mathbf{g}_i \mathbf{w}_i|^2$ through (\ref{eq:11}) and (\ref{eq:12}), the second order error term $\triangle \mathbf{g}_i \mathbf{w}_i \mathbf{w}^H_i \triangle \mathbf{g}^H_i$ is omitted since it is quite small compared with other terms. As the SNR becomes larger, our method is able to keep the lower bound and the upper bound of $|\mathbf{g}_i \mathbf{w}_i|^2$ quite small. Then, it is not appropriate to neglect the influence of the second order error term $\triangle \mathbf{g}_i \mathbf{w}_i \mathbf{w}^H_i \triangle \mathbf{g}^H_i$. As a result of ignoring $\triangle \mathbf{g}_i \mathbf{w}_i \mathbf{w}^H_i \triangle \mathbf{g}^H_i$, the practical SSR would experience some performance loss and be smaller than the lower bound of SSR.

In Fig. 11, the lower bounds of SSR for different bounds of the norm of the channel estimation error are compared. It can be seen that for both Taylor expansion based method and ZF based method, larger bound of the norm of the channel estimation error would decrease the lower bound of SSR. However, the impact is not very significant. For different bounds of the norm of the channel estimation error, the differences of the lower bounds of SSR are little. Larger channel estimation error would surely degrade system performance, but our proposed two methods are able to keep the performance degradation small.

As can be observed in Fig.5, the lower bound of SSR increases with user number k. The normalized lower bound of SSR is defined as the lower bound of SSR divided by user number k. In Fig. 12, the normalized lower bound of SSR for different number of users are compared. It can be seen that as user number k increases, the normalized lower bound of SSR would also increase. That means more users would be beneficial to improve SSR which is mainly due to the proposed beamforming design. It can also be observed that for the same configuration of user number and estimation error norm bound, more transmit antennas would enhance system performance.

\begin{figure}[!htbp]
\centering
\includegraphics[scale=0.57]{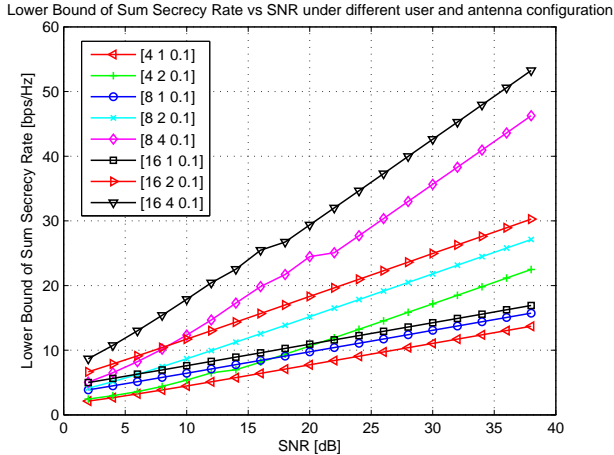}
\caption{lower bound of sum secrecy rate based on Taylor expansion method for $\varepsilon=0.1$.}
\label{fig_TE01}
\end{figure}

\begin{figure}[!htbp]
\centering
\includegraphics[scale=0.57]{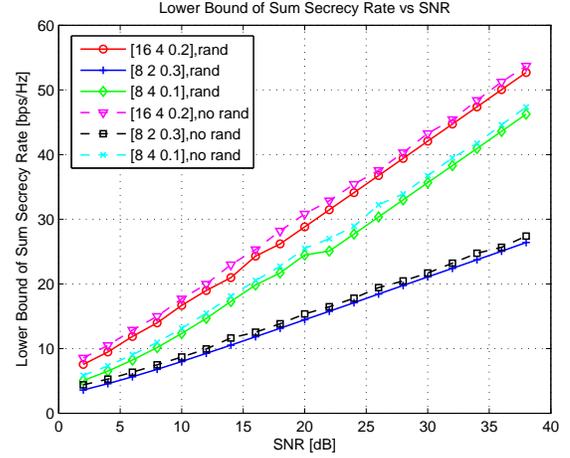}
\caption{the effect of randomization technique in Taylor expansion method.}
\label{fig_randtech}
\end{figure}

\begin{figure}[!htbp]
\centering
\includegraphics[scale=0.57]{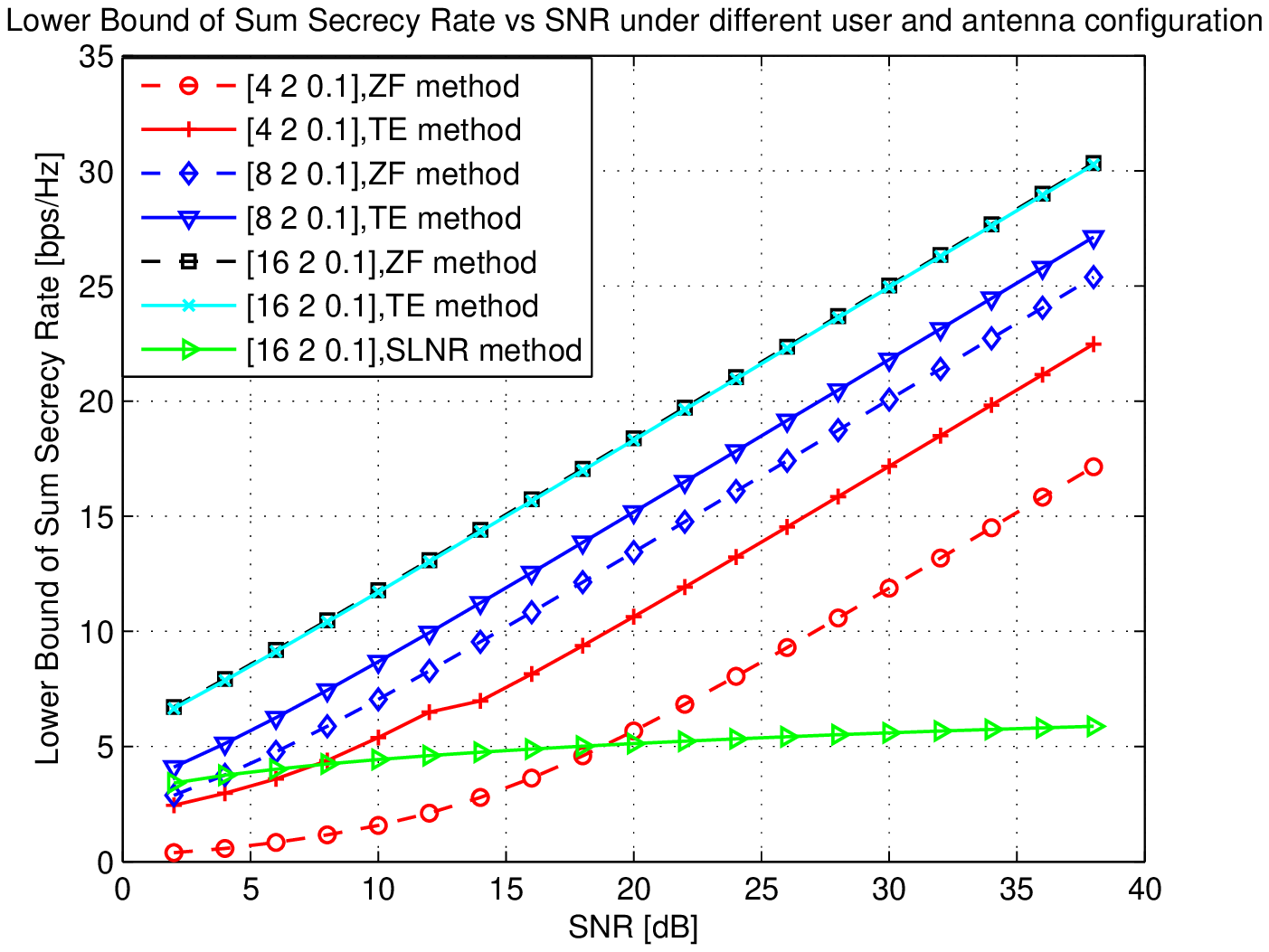}
\caption{lower bound of sum secrecy rate based on Taylor expansion method for $\varepsilon=0.1$ and $K=2$.}
\label{fig_2user01}
\end{figure}

\begin{figure}[!htbp]
\centering
\includegraphics[scale=0.57]{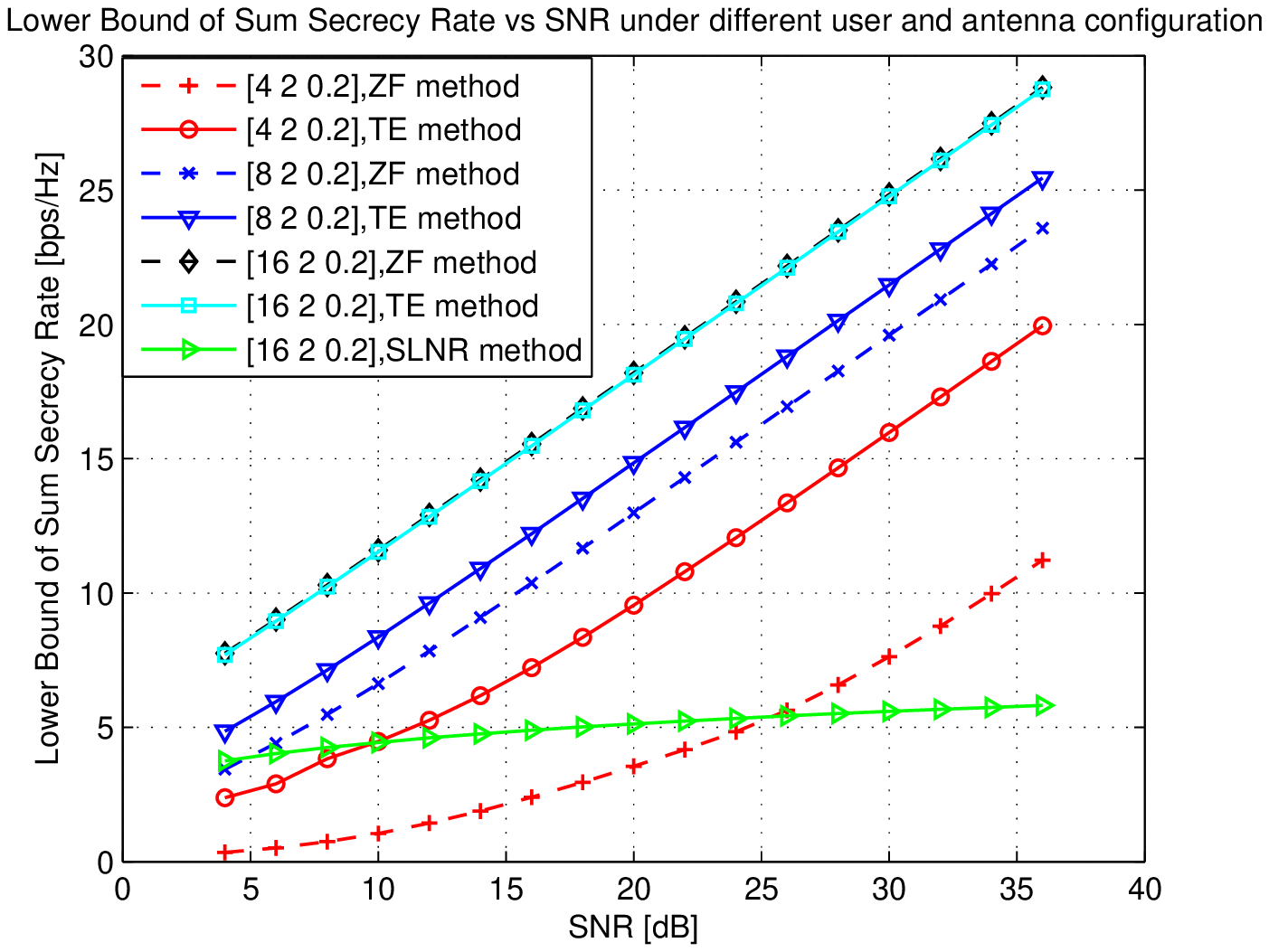}
\caption{lower bound of sum secrecy rate based on Taylor expansion method for $\varepsilon=0.2$ and $K=2$.}
\label{fig_2user02}
\end{figure}

\begin{figure}[!htbp]
\centering
\includegraphics[scale=0.57]{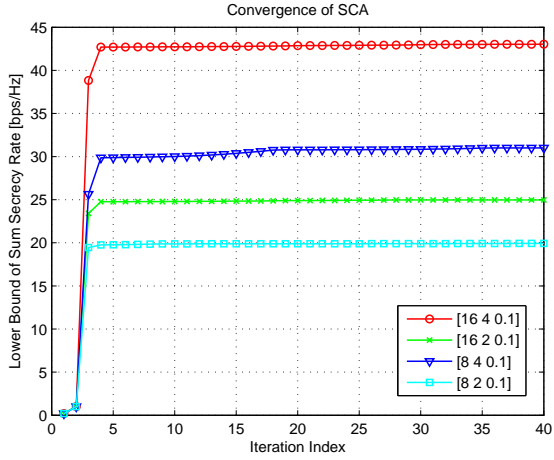}
\caption{convergence of SCA for $\varepsilon=0.1$.}
\label{fig_ite}
\end{figure}

\begin{figure}[!htbp]
\centering
\includegraphics[scale=0.57]{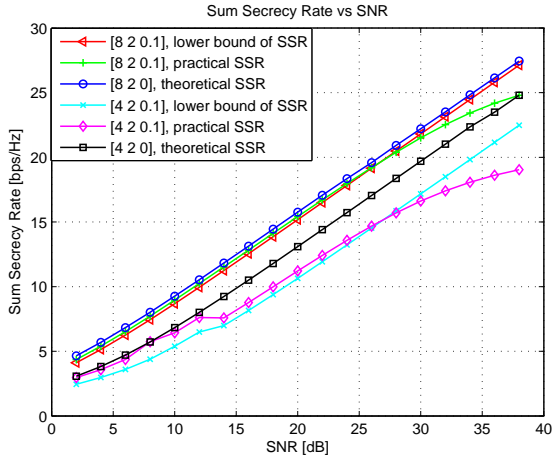}
\caption{comparison between the lower bound of SSR, the practical SSR and the theoretical SSR.}
\label{fig_real}
\end{figure}

\begin{figure}[!htbp]
\centering
\includegraphics[scale=0.57]{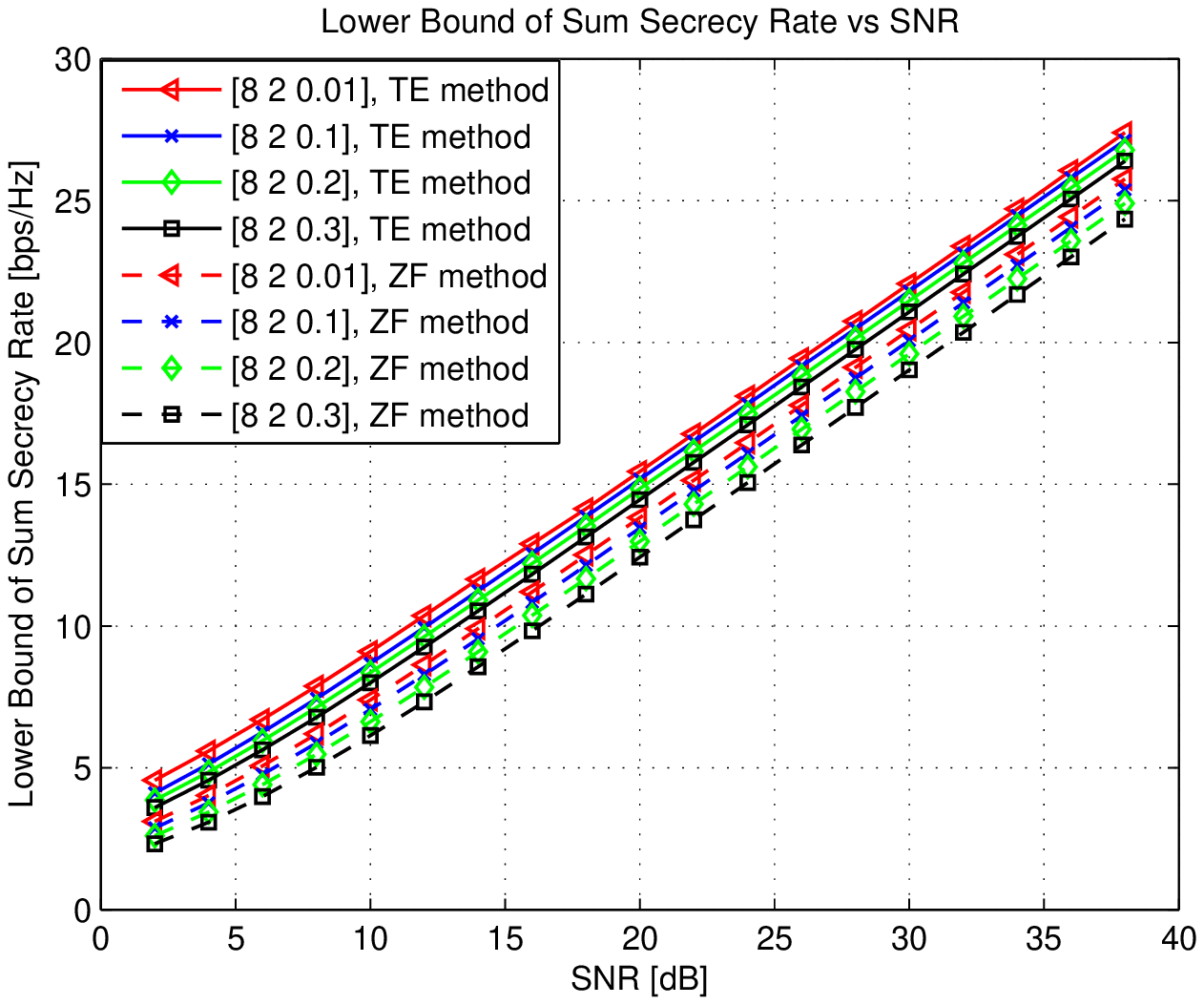}
\caption{comparison between different bounds of the norm of the channel estimation error.}
\label{fig_derr}
\end{figure}

\begin{figure}[!htbp]
\centering
\includegraphics[scale=0.57]{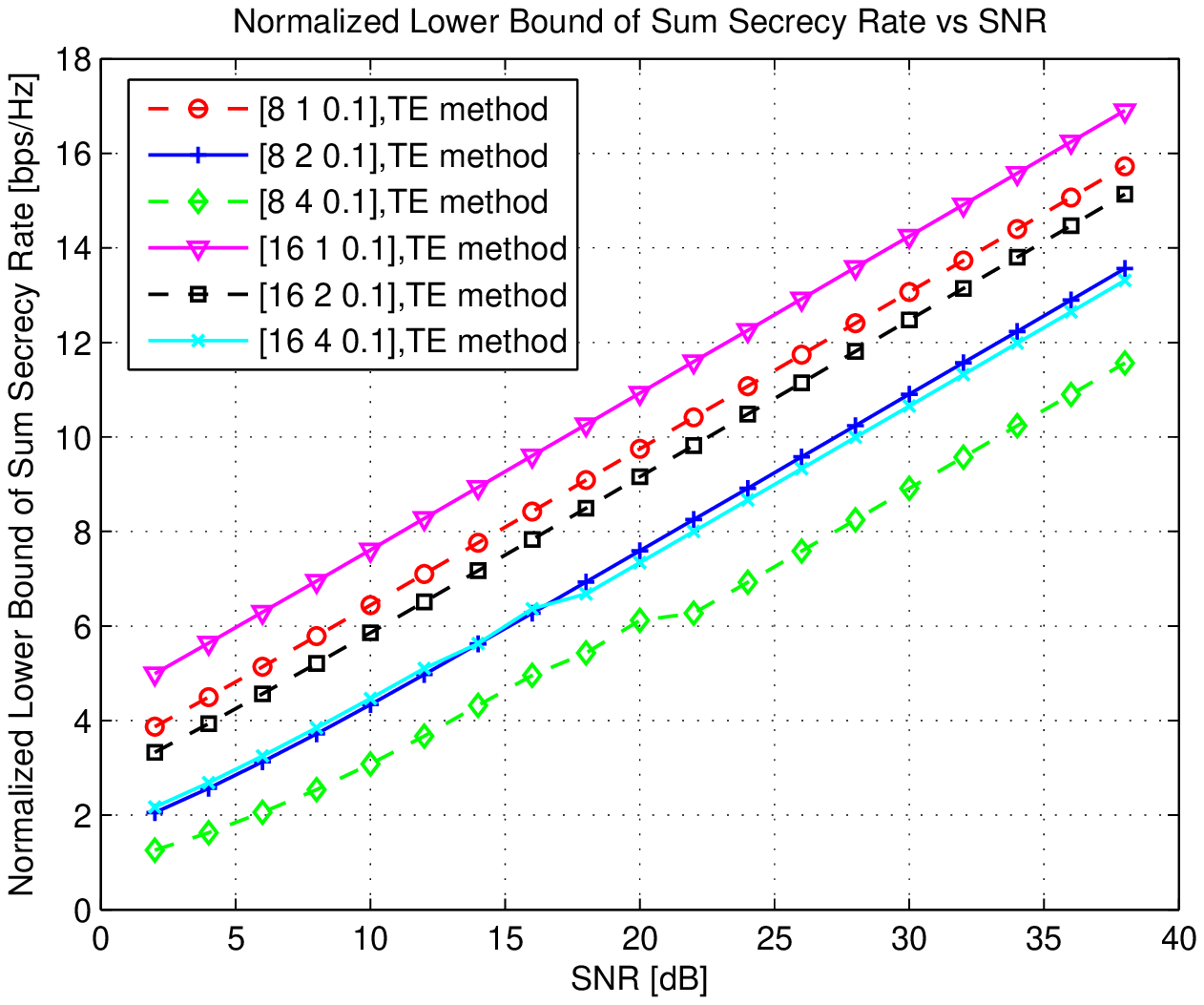}
\caption{comparison between different bounds of the norm of the channel estimation error.}
\label{fig_derr}
\end{figure}

\section{Conclusion}
In this paper, we have presented two efficient algorithms for solving the sum power constrained beamforming design problem for the purpose of maximum SSR. The two methods are mainly based on Taylor expansion and ZF algorithms, respectively. Numerical results show that the Taylor expansion based algorithm achieves better performance than the ZF based algorithm, which is mainly due to the freedom to optimize beamforming direction. In addition, both algorithms outperform the traditional SLNR algorithm.

\end{document}